\newtheorem{theorem}{Theorem}
\newtheorem{corollary}[theorem]{Corollary}
\newtheorem{definition}[theorem]{Definition}
\newtheorem{example}[theorem]{Example}
\newtheorem{lemma}[theorem]{Lemma}
\newenvironment{proof}[1][Proof]{\noindent\textbf{#1.} }{\ \rule{0.5em}{0.5em}}
\begin{document}
\centerline{{\large \textbf{Factorizations and Reductions of Order in Quadratic and other} }}
\vspace{1ex}
\centerline{{\large \textbf{Non-recursive Higher Order Difference Equations} }}
\footnotetext{Key words: non-recursive, form symmetry, factorization, semi-invertible map
criterion, quadratic equation, existence of solutions}

\vspace{4ex}

\centerline{H. SEDAGHAT*} \footnotetext{\noindent*Department of Mathematics,
Virginia Commonwealth University, Richmond, Virginia 23284-2014, USA, Email:
hsedagha@vcu.edu}

\vspace{3ex}

\begin{quote}
\noindent{\small \textbf{Abstract.} A higher order difference equation may be
generally defined in an arbitrary nonempty set }${\small S}$ {\small as:}%
\[
f_{n}(x_{n},x_{n-1},\ldots,x_{n-k})=g_{n}(x_{n},x_{n-1},\ldots,x_{n-k})
\]
{\small where }${\small f}_{{\small n}}{\small ,g}_{{\small n}}{\small :S}%
^{k+1}{\small \rightarrow S}$ {\small are given functions for }%
${\small n=1,2,\ldots}$ {\small and }${\small k}$ {\small is a positive
integer. We present conditions that imply the above equation can be factored
into an equivalent pair of lower order difference equations using possible
form symmetries (order-reducing changes of variables). These results extend
and generalize semiconjugate factorizations of recursive difference equations
on groups. We apply some of this theory to obtain new factorization results
for the important class of quadratic difference equations on algebraic
fields:}%
\[
\sum_{i=0}^{k}\sum_{j=i}^{k}a_{i,j,n}x_{n-i}x_{n-j}+\sum_{j=0}^{k}%
b_{j,n}x_{n-j}+c_{n}=0.
\]

{\small We also discuss the nontrivial issue of the existence of solutions 
for quadratic equations.}
\end{quote}

\section{Introduction}

Let $S$ be a nonempty set and $\{f_{n}\}_{n=1}^{\infty},\{g_{n}\}_{n=1}^{\infty}$ 
be sequences of functions $f_{n},g_{n}:S^{k+1}\rightarrow S$ where $k$ is a positive 
integer. We call the equation
\begin{equation}
f_{n}(x_{n},x_{n-1},\ldots,x_{n-k})=g_{n}(x_{n},x_{n-1},\ldots,x_{n-k}),\quad
n=1,2,3,\ldots\label{nrdek}%
\end{equation}
a \textit{non-recursive difference equation of order} $k$ in the set $S$ if
$f_{n}(u_{0},\ldots,u_{k})$ is not constant in $u_{0}$ and $g_{n}(u_{0}%
,\ldots,u_{k})$ is not constant in $u_{k}$ for all $n\geq1.$ A
\textit{(forward) solution} of (\ref{nrdek}) is a sequence $\{x_{n}%
\}_{n=1}^{\infty}$ in $S$ that makes (\ref{nrdek}) true, given a set of
initial values $x_{-j}$, $j=0,1,\ldots,k-1.$ The \textit{recursive},
non-autonomous equation of order $k,$ i.e.,
\begin{equation}
x_{n}=\psi_{n}(x_{n-1},\ldots,x_{n-k}) \label{rgen}%
\end{equation}
is a special case of (\ref{nrdek}) with%
\begin{align*}
f_{n}(u_{0},u_{1},\ldots,u_{k})  &  =u_{0}\\
g_{n}(u_{0},u_{1},\ldots,u_{k})  &  =\psi_{n}(u_{1},\ldots,u_{k})
\end{align*}
for all $n$ and all $u_{0},u_{1},\ldots,u_{k}\in S.$

An example of non-recursive equations in familiar settings is the
following on the set $\mathbb{R}$ of all real numbers:%
\begin{equation}
|x_{n}|=a|x_{n-1}-x_{n-2}|,\quad0<a<1 \label{nrsm}%
\end{equation}

This states that the \textit{magnitude} of a quantity $x_{n}$ at time $n$ is a
fraction of the difference between its values in the two immediately preceding
times; however, we cannot determine the \textit{sign} of $x_{n}$ from
(\ref{nrsm}). As a possible physical interpretation of (\ref{nrsm}) imagine a
node in a circuit that in every second $n$ fires a pulse $x_{n}$ that may go
either to the right (if $x_{n}>0$) or to the left (if $x_{n}<0$) but the
amplitude $|x_{n}|$ of the pulse obeys Eq.(\ref{nrsm}). With regard to the
variety of solutions for (\ref{nrsm}) we note that the direction of each pulse
is entirely unpredictable, regardless of the directions of previous pulses
emitted by the node; hence a large number of solutions are possible for
(\ref{nrsm}). We discuss this equation in greater detail in the next section.

For recursive equations such as (\ref{rgen}) on groups, recent studies such as
\cite{ks}, \cite{expow}, \cite{hstdfs}, \cite{hsijpam}, \cite{hd1}, 
\cite{hsinvcrt}, \cite{kyoto}, \cite{hsarx}, show that possible form symmetries 
(i.e., order-reducing coordinate
transformations or changes of variables) and the associated semiconjugate
relations may be used to break down the equation into a triangular system \cite{smit} of
lower order difference equations whose orders add up to the order of
(\ref{rgen}). But in general it is not possible to write (\ref{nrdek}) in the
recursive form (\ref{rgen}) so the question arises as to whether the notions
of form symmetry and reduction of order can be extended to the more general
non-recursive context.

In this paper, we show that for Eq.(\ref{nrdek}) basic concepts such as form
symmetry and factorization into factor and cofactor pairs of equations can
still be defined as before, even without a semiconjugate relation. A concept
that is similar to semiconjugacy but which does not require the unfolding map
is sufficient for defining form symmetries and deriving the lower order factor
and cofactor equations. Using this idea we extend previously established
theory of factorization and reduction of order to a much larger class of
difference equations then previously studied. In particular, we apply this
extended theory to the important class of quadratic difference equations.
%
%
\section{Factorizations of non-recursive equations}

Equation (\ref{nrdek}) generalizes the recursive equation
(\ref{rgen}) in a different direction than the customary one, namely, through
unfolding the recursive equation of order $k$ to a special vector map of the
$k$-dimensional \textit{state-space} $S^{k}$. Nevertheless, it is convenient
to define as \textit{states} the points $(x_{n},x_{n-1},\ldots,x_{n-k+1})\in
S^{k}$ or some invariant subset of it that contains the \textit{orbit}

$$\{(x_{n},x_{n-1},\ldots,x_{n-k+1})\}_{n=1}^{\infty}$$ 

\noindent of every solution $\{x_{n}\}_{n=1}^{\infty}$ of (\ref{nrdek}). We may designate 
the point $(x_{0},\ldots,x_{-k+1})$ corresponding to $n=0$ on each orbit as the
\textquotedblleft initial point\textquotedblright\ of that orbit; in contrast
to recursive equations however, there may be many distinct orbits
having the same initial point.

Analyzing the solutions of a non-recursive difference equation such as
(\ref{nrdek}) is generally more difficult than analyzing the solutions of
recursive equations. Unlike the recursive case, even the existence of
solutions for (\ref{nrdek}) in a particular set is not guaranteed. But
studying the form symmetries and reduction of order in non-recursive equations
is worth the effort. The greater generality of these equations not only leads
to the resolution of a wider class of problems, but it also provides for
increased flexibility in handling \textit{recursive} equations.

Before beginning the formal study of factorizations of non-recursive
equations, let us consider an illustrative example that highlights several
issues pertaining to such equations.

\begin{example}
\label{nr3ex}Let $\{a_{n}\}_{n=1}^{\infty}$ be any sequence of non-negative real numbers 
and consider the following third-order difference equation on $\mathbb{R}$:%
\begin{equation}
|x_{n}+x_{n-1}|=a_{n}|x_{n-1}-x_{n-3}|. \label{nrsm3}%
\end{equation}
By adding and subtracting $x_{n-2}$ inside the absolute value on the right
hand side of (\ref{nrsm3}) we find that%
\begin{equation}
|x_{n}+x_{n-1}|=a_{n}|x_{n-1}+x_{n-2}-(x_{n-2}+x_{n-3})| \label{nrsm2}%
\end{equation}
The substitution
\begin{equation}
t_{n}=x_{n}+x_{n-1} \label{nrsm0}%
\end{equation}
in (\ref{nrsm2}) results in the second-order difference equation%
\begin{equation}
|t_{n}|=a_{n}|t_{n-1}-t_{n-2}| \label{nrsm1}%
\end{equation}
that is related to (\ref{nrsm3}) via (\ref{nrsm0}). Eq.(\ref{nrsm1}) is
analogous to a factor equation for (\ref{nrsm3}) while (\ref{nrsm0}), written
as%
\begin{equation}
x_{n}=t_{n}-x_{n-1} \label{nrsm1c}%
\end{equation}
is analogous to a (recursive) cofactor equation. Also the substitution
(\ref{nrsm0}) is analogous to an order-reducing form symmetry; see, e.g.,
\cite{hstdfs}, \cite{hsijpam}, \cite{hd1} or \cite{hsarx}.

The factor equation (\ref{nrsm1}) is of course not recursive; it is a
generalization of (\ref{nrsm}) in the introduction above. If $\{\beta
_{n}\}_{n=1}^{\infty}$ is any fixed but arbitrary binary sequence taking
values in $\{-1,1\}$ then every real solution $\{s_{n}\}_{n=1}^{\infty}$ of
the recursive equation%
\begin{equation}
s_{n}=\beta_{n}a_{n}|s_{n-1}-s_{n-2}| \label{nrsm6}%
\end{equation}
is also a solution of (\ref{nrsm1}). This follows upon taking the absolute
value to see that $\{s_{n}\}_{n=1}^{\infty}$ satisfies Eq.(\ref{nrsm1}). The
single non-recursive equation (\ref{nrsm1}) has as many solutions as can be
generated by each member of the uncountably infinite class of equations
(\ref{nrsm6}) put together. Numerical simulations and other calculations
indicate a wide variety of different solutions for (\ref{nrsm6}) with
different choices of $\{\beta_{n}\}_{n=1}^{\infty}$ and clearly no less is
true about (\ref{nrsm3}).

These facts remain true in the special case mentioned in the Introduction,
i.e., equation (\ref{nrsm}) in which the sequence $a_{n}=a$ is constant. By
way of comparison first consider the case where $\beta_{n}=1$ is constant for
all $n.$ Then each solution $\{s_{n}\}_{n=1}^{\infty}$ of the recursive
difference equation%
\[
s_{n}=a|s_{n-1}-s_{n-2}|,\quad0<a<1
\]
is uniquely defined by an initial point $(s_{0},s_{-1})\in\mathbb{R}^{2}$ and
\begin{equation}
\lim_{n\rightarrow\infty}s_{n}=0. \label{lsn}%
\end{equation}

This claim is proved as follows. Without loss of generality assume that
$s_{0}\geq0$ and define%
\[
\mu=\max\{s_{0},s_{1}\}\geq0.
\]
Then%
\begin{align*}
s_{2}  &  =a|s_{1}-s_{0}|\leq a\max\{s_{0},s_{1}\}\leq a\mu,\\
s_{3}  &  =a|s_{2}-s_{1}|\leq a\max\{s_{2},s_{1}\}\leq a\mu.
\end{align*}
Continuing,
\begin{align*}
s_{4}  &  =a|s_{3}-s_{2}|\leq a\max\{s_{3},s_{2}\}\leq a^{2}\mu,\\
s_{5}  &  =a|s_{4}-s_{3}|\leq a\max\{s_{4},s_{3}\}\leq a^{2}\mu.
\end{align*}
This reasoning by induction yields%
\[
s_{2n},s_{2n+1}\leq a^{n}\mu,\quad\text{for all }n
\]
which proves (\ref{lsn}).

By contrast, if $\{\beta_{n}\}_{n=1}^{\infty}$ is not a constant sequence then
complicated solutions may occur for (\ref{nrsm}). The computer generated
diagram in Figure \ref{fig:nr-absval-chs} shows a part of the solution of the 
difference equation
\[
s_{n}=a\beta_{n}|s_{n-1}-s_{n-2}|
\]

\noindent with parameter values:%
\begin{align*}
a  &  =0.8,\quad s_{-1}=s_{0}=1\\
\beta_{n}  &  =\left\{
\begin{array}
[c]{c}%
-1\quad\text{if }r_{n}<0.45\\
1\quad\text{if }r_{n}\geq0.45
\end{array}
\right. \\
r_{n}  &  =3.75r_{n-1}(1-r_{n-1}),\quad r_{0}=0.4.
\end{align*}


\begin{figure}[tbp] 
  \centering
  \includegraphics[bb=0 0 515 259,width=3.95in,height=1.99in,keepaspectratio]{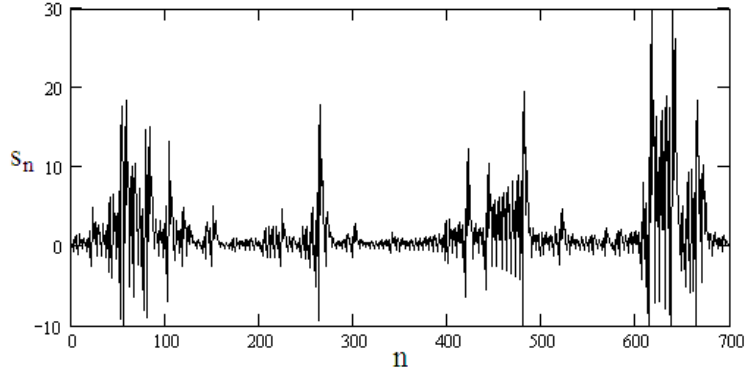}
  \caption{A complex solution of equation (\ref{nrsm}) highlighting the unpredictability of sign changes}
  \label{fig:nr-absval-chs}
\end{figure}

The complex behavior seen in Figure \ref{fig:nr-absval-chs} is indicative of the stochastic nature
of the pulse direction in the physical interpretation in the Introduction.
Although $\{\beta_{n}\}_{n=1}^{\infty}$ does not appear explicitly in
(\ref{nrsm}) the unpredictability of the sign of $x_{n}$ is a way of
interpreting the arbitrary nature of the binary sequence.

Finally, we observe that equation (\ref{nrsm1}) which has order two, also
admits a reduction of order as follows. If $\{t_{n}\}$ is a solution of
(\ref{nrsm1}) that is never zero for all $n$ then we may divide both sides of
(\ref{nrsm1}) by $|t_{n-1}|$ to get%
\[
\left\vert \frac{t_{n}}{t_{n-1}}\right\vert =a_{n}\left\vert 1-\frac{t_{n-2}%
}{t_{n-1}}\right\vert
\]
where the substitution%
\[
r_{n}=\frac{t_{n}}{t_{n-1}}%
\]
(analogous to the inversion form symmetry; see \cite{hsarx}) yields the
first-order difference equation%
\begin{equation}
\left\vert r_{n}\right\vert =a_{n}\left\vert 1-\frac{1}{r_{n-1}}\right\vert .
\label{nrsm4}%
\end{equation}
Eq. (\ref{nrsm4}) is related to (\ref{nrsm1}) via the (recursive) equation%
\begin{equation}
t_{n}=r_{n}t_{n-1}. \label{nrsm4c}%
\end{equation}

\end{example}

The factorizations and corresponding reductions in order given by equations
(\ref{nrsm1}), (\ref{nrsm1c}), (\ref{nrsm4}) and (\ref{nrsm4c}) are among the
types discussed below along with a variety of other possibilities. For
additional examples and further details see \cite{hsfsorbk}.
%
%
\subsection{Form symmetries, factors and cofactors}

In this section we define the concepts of order-reducing form symmetry and the
associated factorization for Eq.(\ref{nrdek}) on an arbitrary set $S.$ In
analogy to semiconjugate factorizations, we seek a decomposition of
Eq.(\ref{nrdek}) into a pair of difference equations of lower orders. A factor
equation of type%
\begin{equation}
\phi_{n}(t_{n},t_{n-1},\ldots,t_{n-m})=\psi_{n}(t_{n},t_{n-1},\ldots
,t_{n-m}),\quad1\leq m\leq k-1 \label{fnrde}%
\end{equation}
may be derived from (\ref{nrdek}) where $\phi_{n},\psi_{n}:S^{m+1}\rightarrow
S$ for all $n$ if there is a sequence of mappings $H_{n}:S^{k+1}\rightarrow
S^{m+1}$ such that
\begin{equation}
f_{n}=\phi_{n}\circ H_{n}\text{ and }g_{n}=\psi_{n}\circ H_{n} \label{nrr}%
\end{equation}
for all $n\geq1.$ If we denote%
\[
H_{n}(u_{0},\ldots,u_{k})=[h_{0,n}(u_{0},\ldots,u_{k}),h_{1,n}(u_{0}%
,\ldots,u_{k}),\ldots,h_{m,n}(u_{0},\ldots,u_{k})]
\]
then for each solution $\{x_{n}\}$ of Eq.(\ref{nrdek})%
\begin{gather*}
\phi_{n}(h_{0,n}(x_{n},\ldots,x_{n-k}),h_{1,n}(x_{n},\ldots,x_{n-k}%
),\ldots,h_{m,n}(x_{n},\ldots,x_{n-k}))=\\
\psi_{n}(h_{0,n}(x_{n},\ldots,x_{n-k}),h_{1,n}(x_{n},\ldots,x_{n-k}%
),\ldots,h_{m,n}(x_{n},\ldots,x_{n-k}))
\end{gather*}

In order for a sequence $\{t_{n}\}$ in $S$ defined by the substitution%
\[
t_{n}=h_{0,n}(x_{n},\ldots,x_{n-k})
\]
to be a solution of (\ref{fnrde}), the functions $H_{n}$ must have a special
form that is defined next.

\begin{definition}
\label{nrfsdef}A sequence of functions $\{H_{n}\}$ is an
\textbf{order-reducing form symmetry} of Eq.(\ref{nrdek}) on a nonempty set
$S$ if there is an integer $m$, $1\leq m<k,$ and sequences of functions
$\phi_{n},\psi_{n}:S^{m+1}\rightarrow S$ and $h_{n}:S^{k-m+1}\rightarrow S$
such that
\begin{equation}
H_{n}(u_{0},\ldots,u_{k})=[h_{n}(u_{0},\ldots,u_{k-m}),h_{n-1}(u_{1}%
,\ldots,u_{k-m+1}),\ldots,h_{n-m}(u_{m},\ldots,u_{k})] \label{nrfs}%
\end{equation}
and the sequences $\{\phi_{n}\}$, $\{\psi_{n}\},\{f_{n}\},\{g_{n}\}$ and
$\{H_{n}\}$ satisfy the relations (\ref{nrr}) for all $n\geq1$.
\end{definition}

If $S=(G,\ast)$ is a group then Definition \ref{nrfsdef} generalizes the
notion of recursive form symmetry in \cite{hstdfs} where the components of
$H_{n}$ are defined as%
\[
h_{n-j}(u_{j},\ldots,u_{k-m+j})=u_{j}\ast\widetilde{h}_{n-j}(u_{j+1}%
,\ldots,u_{k-m+j}),\quad j=0,1,\ldots,m.
\]

We have the following basic factorization theorem for non-recursive difference equations.

\begin{theorem}
\label{nrfacthm}Assume that Eq.(\ref{nrdek}) has an order-reducing form
symmetry $\{H_{n}\}$ defined by (\ref{nrfs}) on a nonempty set $S.$ Then the
difference equation (\ref{nrdek})\ has a factorization into an equivalent
system of factor and cofactor equations
\begin{align}
\phi_{n}(t_{n},\ldots,t_{n-m})  &  =\psi_{n}(t_{n},\ldots,t_{n-m}%
)\label{nrf}\\
h_{n}(x_{n},\ldots,x_{n-k+m})  &  =t_{n} \label{nrcf}%
\end{align}
whose orders $m$\ and $k-m$\ respectively, add up to the order of (\ref{nrdek}).
\end{theorem}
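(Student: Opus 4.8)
The plan is to read ``equivalent'' in the sense that a sequence $\{x_n\}$ in $S$ solves (\ref{nrdek}) if and only if, together with the sequence $\{t_n\}$ supplied by the cofactor substitution (\ref{nrcf}), the pair $(\{x_n\},\{t_n\})$ solves the factor--cofactor system (\ref{nrf})--(\ref{nrcf}). The whole argument rests on one indexing identity: when the staggered map $H_n$ of (\ref{nrfs}) is evaluated along an orbit and (\ref{nrcf}) is invoked, its output is exactly the lagged window $(t_n,t_{n-1},\ldots,t_{n-m})$.

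First I would fix a solution $\{x_n\}$ of (\ref{nrdek}) and set $t_n = h_n(x_n,\ldots,x_{n-k+m})$, so that (\ref{nrcf}) holds by construction. The crucial step is to compute the $j$-th component of $H_n(x_n,\ldots,x_{n-k})$. Substituting $u_i = x_{n-i}$ into (\ref{nrfs}), the $j$-th component (for $j=0,1,\ldots,m$) is $h_{n-j}(x_{n-j},\ldots,x_{n-(k-m+j)})$, and since $x_{n-(k-m+j)} = x_{(n-j)-(k-m)}$, this is precisely $t_{n-j}$ by the definition of $\{t_n\}$ with the index shifted by $j$. Hence
\[
H_n(x_n,\ldots,x_{n-k}) = (t_n,t_{n-1},\ldots,t_{n-m})
\]
for every $n\ge 1$.

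With this identity in hand, the factorization relations (\ref{nrr}) give $f_n(x_n,\ldots,x_{n-k}) = \phi_n(t_n,\ldots,t_{n-m})$ and $g_n(x_n,\ldots,x_{n-k}) = \psi_n(t_n,\ldots,t_{n-m})$. Because $\{x_n\}$ satisfies (\ref{nrdek}), the two left sides agree, so $\{t_n\}$ satisfies the factor equation (\ref{nrf}); this settles one direction. For the converse I would begin with a pair $(\{x_n\},\{t_n\})$ satisfying both (\ref{nrf}) and (\ref{nrcf}); the same indexing identity still yields $H_n(x_n,\ldots,x_{n-k}) = (t_n,\ldots,t_{n-m})$ (now because (\ref{nrcf}) is assumed rather than defined), and then (\ref{nrr}) together with (\ref{nrf}) forces $f_n = g_n$ along $\{x_n\}$, so $\{x_n\}$ solves (\ref{nrdek}). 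A closing line verifies the orders: (\ref{nrf}) runs from $t_n$ down to $t_{n-m}$ (order $m$) while (\ref{nrcf}) runs from $x_n$ down to $x_{n-(k-m)}$ (order $k-m$), summing to $k$.

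I expect the only real obstacle to be bookkeeping: making the index ranges in the components of $H_n$ line up so that the $j$-th component is genuinely $t_{n-j}$ and not a different lag. This is exactly where the special structure imposed in Definition \ref{nrfsdef} does all the work, since each component uses the \emph{same} function family $h$ shifted in both its subscript $n-j$ and its argument block $(u_j,\ldots,u_{k-m+j})$; it is worth confirming that this block has exactly $k-m+1$ entries, matching the domain $S^{k-m+1}$ of $h_{n-j}$, and that no invertibility of $h_n$ in its leading argument is needed, since the theorem asserts equivalence of the two formulations rather than recursiveness of the cofactor.
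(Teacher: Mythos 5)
Your proposal is correct and follows essentially the same route as the paper's proof: define $t_n$ by the cofactor substitution, use the staggered structure of $H_n$ in (\ref{nrfs}) to get the identity $H_n(x_n,\ldots,x_{n-k})=(t_n,\ldots,t_{n-m})$, and then apply (\ref{nrr}) in both directions. Your explicit index check that the $j$-th component equals $t_{n-j}$ is exactly the bookkeeping the paper's chain of equalities relies on implicitly.
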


\begin{proof}
To show the equivalence, we show that for each solution $\{x_{n}\}$ of
(\ref{nrdek}) there is a solution $\{(t_{n},y_{n})\}$ of the system of
equations (\ref{nrf}), (\ref{nrcf}) such that $y_{n}=x_{n}$ for all $n\geq1$
and conversely, for each solution $\{(t_{n},y_{n})\}$ of the system of
equations (\ref{nrf}) and (\ref{nrcf}) the sequence $\{y_{n}\}$ is a solution
of (\ref{nrdek}).

First assume that $\{x_{n}\}$ is a solution of Eq.(\ref{nrdek}) through a
given initial point $(x_{0},x_{-1},\ldots,x_{-k+1})\in G^{k+1}.$ Define the
sequence $\{t_{n}\}$ in $S$ as in (\ref{nrcf}) for $n\geq-m+1$ so that by
(\ref{nrr})%
\begin{align*}
\phi_{n}(t_{n},\ldots,t_{n-m})  &  =\phi_{n}(h_{n}(x_{n},\ldots,x_{n-k+m}%
),\ldots,h_{n-m}(x_{n-m},\ldots,x_{n-k}))\\
&  =\phi_{n}(H_{n}(x_{n},\ldots,x_{n-k}))\\
&  =f_{n}(x_{n},\ldots,x_{n-k})\\
&  =g_{n}(x_{n},\ldots,x_{n-k})\\
&  =\psi_{n}(H_{n}(x_{n},\ldots,x_{n-k}))\\
&  =\psi_{n}(t_{n},\ldots,t_{n-m}).
\end{align*}

It follows that $\{t_{n}\}$ is a solution of (\ref{nrf}). Further, if
$y_{n}=x_{n}$ for $n\geq-k+m$ then by the definition of $t_{n}$, $\{y_{n}\}$
is a solution of (\ref{nrcf}).

Conversely, let $\{(t_{n},y_{n})\}$ be a solution of the factor-cofactor
system (\ref{nrf}), (\ref{nrcf}) with initial values
\[
t_{0},\ldots,t_{-m+1},y_{-m},\ldots y_{-k+1}\in G.
\]

We note that $y_{0},y_{-1},\ldots,y_{-m+1}$ satisfy the equations
\[
h_{j}(y_{j},\ldots,y_{j-k+m})=t_{j},\quad j=0,-1,\ldots,-m+1.
\]

Now for $n\geq1$, (\ref{nrr}) implies%
\begin{align*}
f_{n}(y_{n},\ldots,y_{n-k})  &  =\phi_{n}(H_{n}(y_{n},\ldots,y_{n-k}))\\
&  =\phi_{n}(h_{n}(y_{n},\ldots,y_{n-k+m}),\ldots,h_{n-m}(y_{n-m}%
,\ldots,y_{n-k}))\\
&  =\phi_{n}(t_{n},\ldots,t_{n-m})\\
&  =\psi_{n}(t_{n},\ldots,t_{n-m})\\
&  =\psi_{n}(H_{n}(x_{n},\ldots,x_{n-k}))\\
&  =g_{n}(x_{n},\ldots,x_{n-k}).
\end{align*}

Therefore, $\{y_{n}\}$ is a solution of (\ref{nrdek}).
\end{proof}

The concept of \textit{order reduction types} for non-recursive difference
equations can be defined similarly to recursive equations as in prior studies
and is not repeated here.
%
%
\subsection{Semi-invertible map criterion}

A group structure is necessary for obtaining certain results such as an
extension of the useful invertible map criterion in \cite{hstdfs} and
\cite{hsinvcrt} to non-recursive equations. In this section we assume that
$S=(G,\ast)$ is a non-trivial group with the goal of obtaining an extension of
the invertible map criterion to the non-recursive equation (\ref{nrde}).
For a discussion of some issues pertaining to difference and differential 
equations on algebraic rings see \cite{bert}.

Denoting the identity of $G$ by $\iota$, the difference equation (\ref{nrdek})
can be written in the equivalent form%
\begin{equation}
E_{n}(x_{n},x_{n-1},\ldots,x_{n-k})=\iota,\quad k\geq2,\ n=1,2,3,\ldots
\label{nrde}%
\end{equation}
where $E_{n}=f_{n}\ast\lbrack g_{n}]^{-1}$ (the brackets indicate group
inversion). A type-($k-1,1$) reduction of Eq.(\ref{nrde}) is characterized by
the following factorization%
\begin{align}
\phi_{n}(t_{n},\ldots,t_{n-k+1})  &  =\iota\label{nrk1f}\\
h_{n}(x_{n},x_{n-1})  &  =t_{n} \label{nrk1cf}%
\end{align}
in which the cofactor equation has order one. This system occurs if the
following is a form symmetry of (\ref{nrde}):%
\begin{equation}
H_{n}(u_{0},\ldots,u_{k})=[h_{n}(u_{0},u_{1}),h_{n-1}(u_{1},u_{2}%
),\ldots,h_{n-k+1}(u_{k-1},u_{k})]. \label{nrk1fs}%
\end{equation}

A specific example of the above factorization is the system of equations
(\ref{nrsm1}) and (\ref{nrsm1c}) in Example \ref{nr3ex} which factor
Eq.(\ref{nrsm3}). More examples are discussed below.

The following type of coordinate function $h_{n}$ is of particular interest in
this section.

\begin{definition}
A coordinate function $h:G^{2}\rightarrow G$ on a non-trivial group $G$ is
\textbf{separable} if%
\[
h(u,v)=\mu(u)\ast\theta(v)
\]
for given self-maps $\mu,\theta$ of $G$ into itself. A separable $h$ is
\textbf{right semi-invertible} if $\theta$ is a bijection and \textbf{left
semi-invertible} if $\mu$ is a bijection. If both $\theta$ and $\mu$ are
bijections then $h$ is \textbf{semi-invertible}. A form symmetry $\{H_{n}\}$
is (right, left) semi-invertible if the coordinate function $h_{n}$ is (right,
left) semi-invertible for every $n$.
\end{definition}

Note that a semi-invertible $h$ is \textit{not} a bijection in general; for
instance, consider $h(u,v)=u-v$ where $G$ is the group of all real numbers
under addition.

Clearly, functions of type $u\ast\widetilde{h}(v)$ where $\widetilde{h}$ is a
bijection are semi-invertible functions. Therefore, semi-invertible functions
generalize the types of maps discussed previously in \cite{hstdfs}, \cite{hsijpam}, 
\cite{hsinvcrt} and \cite{hsarx}. The next theorem shows that the invertible
map criterion discussed in these references can be extended to all right
semi-invertible form symmetries.

\begin{theorem}
\label{seminv}(Semi-invertible map criterion) Assume that $h_{n}(u,v)=\mu
_{n}(u)\ast\theta_{n}(v)$ is a sequence of right semi-invertible functions
with bijections $\theta_{n}$ of a group $G.$ For arbitrary $u_{0},v_{1}%
,\ldots,v_{k}\in G$ define $\zeta_{0,n}\equiv u_{0}$ and for $j=1,\ldots,k$
\begin{equation}
\zeta_{j,n}(u_{0},v_{1},\ldots,v_{j})=\theta_{n-j+1}^{-1}([\mu_{n-j+1}%
(\zeta_{j-1,n}(u_{0},v_{1},\ldots,v_{j-1}))]^{-1}\ast v_{j}) \label{nrzetaj}%
\end{equation}
with the usual distinction observed between map inversion and group inversion.
Then Eq.(\ref{nrde}) has the form symmetry (\ref{nrk1fs}) and the associated
factorization into equations (\ref{nrk1f}) and (\ref{nrk1cf}) if and only if
the following quantity%
\begin{equation}
E_{n}(u_{0},\zeta_{1,n}(u_{0},v_{1}),\ldots,\zeta_{k,n}(u_{0},v_{1}%
,\ldots,v_{k})) \label{Enzeta}%
\end{equation}
is independent of $u_{0}$ for all $n.$ In this case, the factor functions
$\phi_{n}$ are given by%
\begin{equation}
\phi_{n}(v_{1},\ldots,v_{k})=E_{n}(u_{0},\zeta_{1,n}(u_{0},v_{1}),\ldots
,\zeta_{k,n}(u_{0},v_{1},\ldots,v_{k})). \label{phinz}%
\end{equation}

\end{theorem}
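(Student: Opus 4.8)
The plan is to reduce the assertion to the defining property of a type-$(k-1,1)$ form symmetry and then to read off the stated condition through an explicit, invertible change of variables. By Definition \ref{nrfsdef} together with the relations (\ref{nrr}), the form symmetry (\ref{nrk1fs}) and its factorization into (\ref{nrk1f})--(\ref{nrk1cf}) exist if and only if there is a sequence $\{\phi_n\}$ with $E_n=\phi_n\circ H_n$, i.e.
\[
E_n(u_0,\ldots,u_k)=\phi_n(h_n(u_0,u_1),h_{n-1}(u_1,u_2),\ldots,h_{n-k+1}(u_{k-1},u_k))
\]
for all $u_0,\ldots,u_k\in G$; the right-hand side $\psi_n\equiv\iota$ of the factor equation is constant and imposes nothing extra. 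So everything comes down to deciding when $E_n$ depends on its arguments only through the $k$ group elements $t_{n-j+1}=h_{n-j+1}(u_{j-1},u_j)$, $j=1,\ldots,k$.

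First I would introduce the new coordinates $v_j:=h_{n-j+1}(u_{j-1},u_j)=\mu_{n-j+1}(u_{j-1})\ast\theta_{n-j+1}(u_j)$ and show that $(u_0,u_1,\ldots,u_k)\mapsto(u_0,v_1,\ldots,v_k)$ is a bijection of $G^{k+1}$ for each fixed $n$. Right semi-invertibility (each $\theta_{n-j+1}$ a bijection) lets me solve the defining relation for $u_j$: from $v_j=\mu_{n-j+1}(u_{j-1})\ast\theta_{n-j+1}(u_j)$ one gets $u_j=\theta_{n-j+1}^{-1}([\mu_{n-j+1}(u_{j-1})]^{-1}\ast v_j)$, with the map inverse $\theta_{n-j+1}^{-1}$ carefully distinguished from the group inverse. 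Feeding in the inductive hypothesis $u_{j-1}=\zeta_{j-1,n}(u_0,v_1,\ldots,v_{j-1})$ reproduces exactly the recursion (\ref{nrzetaj}), so an induction on $j$ (base case $u_0=\zeta_{0,n}$) yields $u_j=\zeta_{j,n}(u_0,v_1,\ldots,v_j)$ for $j=1,\ldots,k$. This simultaneously shows the map is onto (the $\zeta$'s provide a preimage) and one-to-one (the $u_j$ are uniquely determined), that the two directions are mutually inverse, and in particular that $H_n(u_0,\zeta_{1,n},\ldots,\zeta_{k,n})=(v_1,\ldots,v_k)$.

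With this change of variables the equivalence follows in both directions. For the forward implication, if $E_n=\phi_n\circ H_n$ then substituting $u_j=\zeta_{j,n}$ gives $E_n(u_0,\zeta_{1,n},\ldots,\zeta_{k,n})=\phi_n(v_1,\ldots,v_k)$, which does not involve $u_0$, so the quantity (\ref{Enzeta}) is independent of $u_0$. Conversely, if (\ref{Enzeta}) is independent of $u_0$, I would use it to \emph{define} $\phi_n$ by (\ref{phinz}) and then verify $E_n=\phi_n\circ H_n$: given an arbitrary point $(u_0,\ldots,u_k)$, set $v_j=h_{n-j+1}(u_{j-1},u_j)$, invoke the inversion identity $\zeta_{j,n}(u_0,v_1,\ldots,v_j)=u_j$ just established (evaluating $\phi_n$ at the matching $u_0$), and compute $\phi_n(H_n(u_0,\ldots,u_k))=\phi_n(v_1,\ldots,v_k)=E_n(u_0,u_1,\ldots,u_k)$. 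The definition of $\phi_n$ is legitimate precisely because the right side of (\ref{phinz}) is assumed free of $u_0$.

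The genuinely delicate part is the inductive construction of the inverse change of variables and the bookkeeping of the two kinds of inversion: confirming that (\ref{nrzetaj}) really is the unique solution of $v_j=\mu_{n-j+1}(u_{j-1})\ast\theta_{n-j+1}(u_j)$ for $u_j$, and that the shifted indices $n-j+1$ in the coordinate functions line up with those in $H_n$. Everything else is a short substitution argument. I expect no further difficulty, since only the $\theta_{n-j+1}$ need be invertible while the left factors $\mu_{n-j+1}$ are never inverted — which is exactly why right semi-invertibility, rather than full invertibility, is enough.
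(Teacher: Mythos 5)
Your proposal is correct and follows essentially the same route as the paper's proof: both directions of the equivalence rest on the inductive identities $\zeta_{j,n}(u_0,h_n(u_0,u_1),\ldots,h_{n-j+1}(u_{j-1},u_j))=u_j$ and $h_{n-j+1}(\zeta_{j-1,n},\zeta_{j,n})=v_j$, obtained by inverting only the $\theta_{n-j+1}$, exactly as in the paper. Your packaging of these two identities as a single statement that $(u_0,\ldots,u_k)\mapsto(u_0,v_1,\ldots,v_k)$ is a bijection of $G^{k+1}$ is a tidy reorganization but not a different argument.
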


\begin{proof}
Assume that the quantity in (\ref{Enzeta}) is independent of $u_{0}$ for every
$n.$ Then the functions $\phi_{n}$ in (\ref{phinz}) are well defined and if
$H_{n}$ is given by (\ref{nrk1fs}) and $v_{j+1}=h_{n-j}(u_{j},u_{j+1})$ for
$j=0,\ldots,k-1$ in (\ref{nrzetaj}) then%
\begin{align*}
\phi_{n}(H_{n}(u_{0},\ldots,u_{k}))  &  =\phi_{n}(h_{n}(u_{0},u_{1}%
),h_{n-1}(u_{1},u_{2}),\ldots,h_{n-k+1}(u_{k-1},u_{k}))\\
&  =E_{n}(u_{0},\zeta_{1,n}(u_{0},h_{n}(u_{0},u_{1})),\ldots,\\
&  \hspace{0.6in}\zeta_{k,n}(u_{0},h_{n}(u_{0},u_{1}),\ldots,h_{n-k+1}%
(u_{k-1},u_{k})).
\end{align*}

Now, observe that%
\[
\zeta_{1,n}(u_{0},h_{n}(u_{0},u_{1}))=\theta_{n}^{-1}([\mu_{n}(u_{0}%
)]^{-1}\ast\mu_{n}(u_{0})\ast\theta_{n}(u_{1}))=u_{1}.
\]

By way of induction, assume that for $j<k$
\begin{equation}
\zeta_{j,n}(u_{0},h_{n}(u_{0},u_{1}),\ldots,h_{n-j+1}(u_{j-1},u_{j}))=u_{j}
\label{nrzjinduc}%
\end{equation}
and note that%
\begin{align*}
\zeta_{j+1,n}(u_{0},\ldots,h_{n-j}(u_{j},u_{j+1}))  &  =\theta_{n-j}^{-1}%
([\mu_{n-j}(\zeta_{j,n}(u_{0},\ldots,h_{n-j+1}(u_{j-1},u_{j})))]^{-1}\\
&  \hspace{1in}\ast\mu_{n-j}(u_{j})\ast\theta_{n-j}(u_{j+1}))\\
&  =\theta_{n-j}^{-1}([\mu_{n-j}(u_{j})]^{-1}\ast\mu_{n-j}(u_{j})\ast
\theta_{n-j}(u_{j+1}))\\
&  =u_{j+1}.
\end{align*}

It follows that (\ref{nrzjinduc}) is true for all $j=0,1,\ldots,k$ and thus%
\[
\phi_{n}(H_{n}(u_{0},\ldots,u_{k}))=E_{n}(u_{0},\ldots,u_{k})
\]
i.e., $\{H_{n}\}$ as defined by by (\ref{nrk1fs}) is a form symmetry of
Eq.(\ref{nrde}) and therefore, Theorem \ref{nrfacthm} implies the existence of
the associated factorization into equations (\ref{nrk1f}) and (\ref{nrk1cf}).

Conversely, suppose that $\{H_{n}\}$ as defined by by (\ref{nrk1fs}) is a form
symmetry of Eq.(\ref{nrde}). Then there are functions $\phi_{n}$ such that for
all $u_{0},v_{1},\ldots,v_{k}\in G$
\begin{align*}
E_{n}(u_{0},\zeta_{1,n},\ldots,\zeta_{k,n})  &  =\phi_{n}(H_{n}(u_{0}%
,\zeta_{1,n},\ldots,\zeta_{k,n}))\\
&  =\phi_{n}(h_{n}(u_{0},\zeta_{1,n}),\ldots,h_{n-k+1}(\zeta_{k-1,n}%
,\zeta_{k,n}))
\end{align*}
where $\zeta_{j,n}=\zeta_{j,n}(u_{0},v_{1},\ldots,v_{j})$ for $j=1,\ldots,k.$
Since%
\begin{align*}
h_{n-j+1}(\zeta_{j-1,n},\zeta_{j,n})  &  =\mu_{n-j+1}(\zeta_{j-1,n})\ast
\theta_{n-j+1}(\zeta_{j,n})\\
&  =\mu_{n-j+1}(\zeta_{j-1,n})\ast\theta_{n-j+1}(\theta_{n-j+1}^{-1}%
([\mu_{n-j+1}(\zeta_{j-1,n})]^{-1}\ast v_{j}))\\
&  =\mu_{n-j+1}(\zeta_{j-1,n})\ast\lbrack\mu_{n-j+1}(\zeta_{j-1,n})]^{-1}\ast
v_{j}\\
&  =v_{j}%
\end{align*}
it follows that $E_{n}(u_{0},\zeta_{1,n},\ldots,\zeta_{k,n})$ is independent
of $u_{0}$ for all $n,$ as stated.
\end{proof}

Special choices of $\mu_{n}$ and $\theta_{n}$ give analogs of the identity,
inversion and linear form symmetry that are discussed in \cite{hsarx}. The
next example illustrates both Theorem \ref{seminv} and the significant fact
that a \textit{recursive} difference equation may have \textit{non-recursive}
form symmetries.

\begin{example}
Consider the recursive difference equation
\begin{equation}
x_{n}=\sqrt{ax_{n-1}^{2}+bx_{n-1}+cx_{n-2}+d} \label{rnr}%
\end{equation}
where $a,b,c,d\in\mathbb{R}$. To find potential form symmetries of this
equation, first we note that every real solution of (\ref{rnr}) is a
(non-negative) solution of the following quadratic difference equation%
\begin{equation}
x_{n}^{2}-ax_{n-1}^{2}-bx_{n-1}-cx_{n-2}-d=0. \label{rnr1}%
\end{equation}
Based on the existing terms in (\ref{rnr1}), we explore the existence of a
right semi-invertible form symmetry of type%
\begin{equation}
h(u,v)=u^{2}+\alpha v. \label{rnr2}%
\end{equation}
We emphasize that (\ref{rnr2}) is not a recursive form symmetry of the type
discussed in \cite{hstdfs} or \cite{hsarx}. Here $\theta^{-1}(v)=v/\alpha$ so%
\begin{align}
\zeta_{1}(u_{0},v_{1})  &  =\frac{1}{\alpha}(-u_{0}^{2}+v_{1}),\label{rnr3}\\
\zeta_{2}(u_{0},v_{1},v_{2})  &  =\frac{1}{\alpha}v_{2}-\frac{1}{\alpha^{3}%
}(-u_{0}^{2}+v_{1})^{2}. \label{rnr4}%
\end{align}
By Theorem \ref{seminv}, (\ref{rnr2}) is a form symmetry for (\ref{rnr}) if
and only if the following quantity is independent of $u_{0}$%
\begin{equation}
E(u_{0},\zeta_{1},\zeta_{2})=u_{0}^{2}-a\zeta_{1}^{2}-b\zeta_{1}-c\zeta_{2}-d.
\label{rnr5}%
\end{equation}
Using (\ref{rnr3}) and (\ref{rnr4}) in (\ref{rnr5}) and setting the
coefficients of all terms containing $u_{0}$ equal to zero gives the following
two distinct conditions on parameters%
\[
1+\frac{b}{\alpha}=0,\quad-\frac{a}{\alpha^{2}}+\frac{c}{\alpha^{3}}=0.
\]
From these we obtain%
\begin{equation}
\alpha=-b,\quad c=a\alpha=-ab. \label{rnr6}%
\end{equation}
If $b,c\not =0$ then conditions (\ref{rnr6}) imply the existence of a form
symmetry $u^{2}-bv$ for (\ref{rnr}) with a corresponding factorization:%
\begin{align}
t_{n}-at_{n-1}  &  =d,\label{rnr7}\\
x_{n}^{2}-bx_{n-1}  &  =t_{n}. \label{rnr8}%
\end{align}
The positive square root of $x_{n}$ in the cofactor equation (\ref{rnr8}) can
be used to obtain a factorization of the recursive equation (\ref{rnr}) as%
\begin{align*}
t_{n}  &  =at_{n-1}+d,\\
x_{n}  &  =\sqrt{bx_{n-1}+t_{n}},\quad t_{0}=x_{0}^{2}-bx_{-1}.
\end{align*}

Note that this factorization of (\ref{rnr}) as a system of recursive equations
is derived indirectly via the non-recursive equation (\ref{rnr1}) and its
factorization (\ref{rnr7}) and (\ref{rnr8}) rather than directly through a
semiconjugate relation.
\end{example}

The existence and asymptotic behaviors of real solutions discussed in the
preceding example are not as easily inferred from a direct investigation of
(\ref{rnr}). For a discussion of solutions of (\ref{rnr}) using the above
factorization see \cite{hsfsorbk}.

If $\{H_{n}\}$ is a semi-invertible (right and left) form symmetry of
Eq.(\ref{nrde}) then the following result states that the cofactor equation
(\ref{nrk1cf}) can be expressed in recursive form.

\begin{corollary}
Assume that the functions $h_{n}$ in Theorem \ref{seminv} are semi-invertible
so that both $\mu_{n}$ and $\theta_{n}$ are bijections. Then Eq.(\ref{nrde})
has the following factorization%
\begin{gather*}
\phi_{n}(t_{n},\ldots,t_{n-k+1})=\iota\\
x_{n}=\mu_{n}^{-1}(t_{n}\ast\lbrack\theta_{n}(x_{n-1})]^{-1})
\end{gather*}
in which the cofactor equation is recursive.
\end{corollary}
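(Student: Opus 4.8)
The plan is to reduce the corollary directly to Theorem \ref{seminv} and then to solve the cofactor equation for $x_n$ using the extra hypothesis that each $\mu_n$ is a bijection. Since semi-invertibility means both $\mu_n$ and $\theta_n$ are bijections, in particular every $\theta_n$ is a bijection, so each $h_n$ is right semi-invertible. Hence the hypotheses of Theorem \ref{seminv} are in force, and that theorem already supplies the factorization of Eq.(\ref{nrde}) into the factor equation (\ref{nrk1f}) and the cofactor equation (\ref{nrk1cf}). The factor equation $\phi_n(t_n,\ldots,t_{n-k+1})=\iota$ is identical in the two statements and requires no further work, so all that remains is to rewrite the cofactor equation (\ref{nrk1cf}) in recursive form.

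First I would write out (\ref{nrk1cf}) using separability, namely $\mu_n(x_n)\ast\theta_n(x_{n-1})=t_n$. Next I would isolate the factor $\mu_n(x_n)$ by right-multiplying both sides by the group inverse $[\theta_n(x_{n-1})]^{-1}$; because $G$ need not be abelian, the careful point is to multiply on the correct (right) side so that the $\theta_n(x_{n-1})$ cancels, yielding $\mu_n(x_n)=t_n\ast[\theta_n(x_{n-1})]^{-1}$. Finally, since $\mu_n$ is a bijection, I would apply the map inverse $\mu_n^{-1}$ to both sides to obtain $x_n=\mu_n^{-1}(t_n\ast[\theta_n(x_{n-1})]^{-1})$, which is exactly the asserted recursive cofactor equation. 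Because $x_n$ now appears alone on the left while the right-hand side depends only on $t_n$ and $x_{n-1}$, the cofactor equation is recursive, as claimed.

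There is no genuine analytic obstacle here; the content is entirely the algebraic observation that the left factor $\mu_n(u)$ in the separable coordinate function can be undone precisely when $\mu_n$ is invertible. The only point requiring care---and the one I would state explicitly---is the noncommutativity of $G$: one must keep $t_n$ to the left of $[\theta_n(x_{n-1})]^{-1}$ and then apply $\mu_n^{-1}$, observing throughout the distinction between map inversion and group inversion already flagged in Theorem \ref{seminv}.
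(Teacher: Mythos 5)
Your proposal is correct and is precisely the argument the paper leaves implicit (the corollary is stated without proof): invoke Theorem \ref{seminv} for the factorization, then solve the separable cofactor relation $\mu_{n}(x_{n})\ast\theta_{n}(x_{n-1})=t_{n}$ by right-multiplying by $[\theta_{n}(x_{n-1})]^{-1}$ and applying $\mu_{n}^{-1}$. Your explicit attention to the order of multiplication in a possibly non-abelian $G$ and to the distinction between map and group inversion is exactly the right care to take.
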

%
%
\section{Quadratic difference equations}

In \cite{hstdfs} it is shown that a non-homogeneous and non-autonomous linear
difference equation%
\begin{equation}
x_{n}+a_{1,n}x_{n-1}+\cdots+a_{k,n}x_{n-k}=b_{n},\quad a_{k,n}\not =%
0\ \text{for all }n\geq0 \label{glin}%
\end{equation}
has the linear form symmetry and with the corresponding SC\ factorization over
a non-trivial algebraic field $\mathcal{F}$ if the associated Riccati
equation
\[
\alpha_{n}=a_{0,n}+\frac{a_{1,n}}{\alpha_{n-1}}+\frac{a_{2,n}}{\alpha
_{n-1}\alpha_{n-2}}+\cdots+\frac{a_{k,n}}{\alpha_{n-1}\cdots\alpha_{n-k}}.
\]
of order $k-1$ has a solution $\{\alpha_{n}\}$ in $\mathcal{F}$. It can be
checked that this condition is equivalent to the existence of a
solution of the homogeneous part of (\ref{glin})
\begin{equation}
x_{n}+a_{1,n}x_{n-1}+\cdots+a_{k,n}x_{n-k}=0 \label{glinh}%
\end{equation}
that is never zero. For if $\{y_{n}\}$ is a nonzero solution of (\ref{glinh})
then the ratio sequence $\{y_{n+1}/y_{n}\}$ is a solution of the Riccati equation 
above. These facts lead to a complete analysis of the factorization of
(\ref{glin}) over algebraic fields; see \cite{hsfsorbk} for additional details.
Riccati difference equations have been studied in \cite{kul} (order one) and
\cite{ric2} (order two).

A natural generalization of the linear equation (\ref{glin}) is the
\textit{quadratic difference equation }over a field $\mathcal{F}$%
\begin{equation}
\sum_{i=0}^{k}\sum_{j=i}^{k}a_{i,j,n}x_{n-i}x_{n-j}+\sum_{j=0}^{k}%
b_{j,n}x_{n-j}+c_{n}=0 \label{nrq}%
\end{equation}
that is defined by the general quadratic expression%
\[
E_{n}(u_{0},u_{1},\ldots,u_{k})=\sum_{i=0}^{k}\sum_{j=i}^{k}a_{i,j,n}%
u_{i}u_{j}+\sum_{j=0}^{k}b_{j,n}u_{j}+c_{n}.
\]

Linear equations are obviously special cases of quadratic ones where
$a_{i,j,n}=0$ for all $i,j,n.$ Further, equations of type (\ref{nrq}) also
include the familiar \textit{rational recursive equations} of type%
\begin{equation}
x_{n}=\frac{-\sum_{i=1}^{k}\sum_{j=i}^{k}a_{i,j,n}x_{n-i}x_{n-j}-\sum
_{j=1}^{k}b_{j,n}x_{n-j}-c_{n}}{\sum_{j=1}^{k}a_{0,j,n}x_{n-j}+b_{0,n}}
\label{qlre}%
\end{equation}
\noindent as special cases where
\begin{equation}
a_{0,0,n}=0\text{ for all }n. \label{qlrec}%
\end{equation}

Special cases of (\ref{qlre}) over the field $\mathbb{R}$ of real numbers
include the\ Ladas rational difference equations%
\[
x_{n}=\frac{\alpha+\sum_{j=1}^{k}\beta_{j}x_{n-j}}{A+\sum_{j=1}^{k}%
B_{j}x_{n-j}}%
\]

\noindent as well as familiar quadratic polynomial equations such as the logistic
equation%
\[
x_{n}=ax_{n-1}(1-x_{n-1}),
\]

\noindent the logistic equation with a delay, e.g.,%
\[
x_{n}=x_{n-1}(a-bx_{n-2}-x_{n-1})
\]

\noindent and the Henon difference equation
\[
x_{n}=a+bx_{n-2}-x_{n-1}^{2}.
\]

These rational and polynomial equations have been studied extensively; see, e.g., 
\cite{cl}, \cite{dkmos1}, \cite{dkmos2}, \cite{gl}, \cite{kol}, \cite{kul},
\cite{ord23}, \cite{hsbk}.

We note that all solutions of (\ref{qlre}) are also solutions of the quadratic
(\ref{nrq}) when condition (\ref{qlrec}) is satisfied. The extensive and still
far-from-complete work on rational equations of type (\ref{qlre}) is a clear
indication that unlike the linear case, the existence of a factorization for
Eq.(\ref{nrq}) is not assured and in general, finding any factorization into
lower order equations is a challenging problem.
%
%
\subsection{Existence of solutions}

When (\ref{qlrec}) holds solutions of (\ref{qlre}) may be recursively
generated. Difficulties arise only when the denominator becomes zero at some
iteration; such singularities often arise from small sets in the state-space
and are discussed in the literature on rational recursive equations.
However, if
\begin{equation}
a_{0,0,n}\not =0\text{ for all }n \label{annt0}%
\end{equation}
then the problem of the existence of solutions for (\ref{nrq}) is entirely
different in nature. If (\ref{annt0}) holds then the issue is not division by
zero but the existence of square roots in the field $\mathcal{F}$. We examine
this issue for the familiar field $\mathbb{R}$ of real numbers. The next
example illustrates a key idea.

\begin{example}
\label{qmtx}Let $a,b,c$ be real numbers such that
\begin{equation}
a\not =0\text{ and }b,c>0. \label{qmtp}%
\end{equation}
and consider the quadratic difference equation%
\begin{equation}
x_{n}^{2}=ax_{n}x_{n-1}+bx_{n-2}^{2}+c. \label{qmt}%
\end{equation}
We solve for the term $x_{n}$ by completing the squares:%
\begin{align*}
x_{n}^{2}-ax_{n}x_{n-1}+\frac{a^{2}}{4}x_{n-1}^{2}  &  =\frac{a^{2}}{4}%
x_{n-1}^{2}+bx_{n-2}^{2}+c\\
\left(  x_{n}-\frac{a}{2}x_{n-1}\right)  ^{2}  &  =\frac{a^{2}}{4}x_{n-1}%
^{2}+bx_{n-2}^{2}+c.
\end{align*}
Now we take the square root which introduces a binary sequence $\{\beta
_{n}\}_{n=1}^{\infty}$ with $\beta_{n}\in\{-1,1\}$ chosen arbitrarily for
every $n$:%
\begin{align}
x_{n}-\frac{a}{2}x_{n-1}  &  =\beta_{n}\sqrt{\frac{a^{2}}{4}x_{n-1}%
^{2}+bx_{n-2}^{2}+c}\nonumber\\
x_{n}  &  =\frac{a}{2}x_{n-1}+\beta_{n}\sqrt{\frac{a^{2}}{4}x_{n-1}%
^{2}+bx_{n-2}^{2}+c} \label{qmt1}%
\end{align}
Under conditions (\ref{qmtp}), for each fixed sequence $\{\beta_{n}%
\}_{n=1}^{\infty}$ every solution of the recursive equation (\ref{qmt1}) with
real initial values is real because the quantity under the square root is
always non-negative. Furthermore, since%
\[
\sqrt{\frac{a^{2}}{4}x_{n-1}^{2}+bx_{n-2}^{2}+c}>\left\vert \frac{a}{2}%
x_{n-1}\right\vert
\]
it follows that for each $n,$
\begin{align*}
x_{n}  &  >0\quad\text{if\quad}\beta_{n}=1,\\
x_{n}  &  <0\quad\text{if\quad}\beta_{n}=-1.
\end{align*}
This sign-switching implies that a significant variety of oscillating
solutions are possible for Eq.(\ref{qmt}) under conditions (\ref{qmtp}).
Indeed, since $\beta_{n}$ is chosen arbitrarily, for every sequence of
positive integers%
\[
\{m_{1},m_{2},m_{3},\ldots\}
\]
there is a solution of (\ref{qmt}) that starts with positive values of $x_{n}$
for $m_{1}$ terms by setting $\beta_{n}=1$ for $1\leq n\leq m_{1}$. Then
$x_{n}<0$ for the next $m_{2}$ terms with $\beta_{n}=-1$ for $n$ in the range
\[
m_{1}+1\leq n\leq m_{1}+m_{2}%
\]
and so on with the sign of $x_{n}$ switching according to the sequence
$\{m_{n}\}_{n=1}^{\infty}.$
\end{example}

Figure \ref{fig:nr-arbit-oscil} illustrates the last part of the above example.


\begin{figure}[tbp] 
  \centering
  \includegraphics[bb=0 0 480 271,width=3.92in,height=2.21in,keepaspectratio]{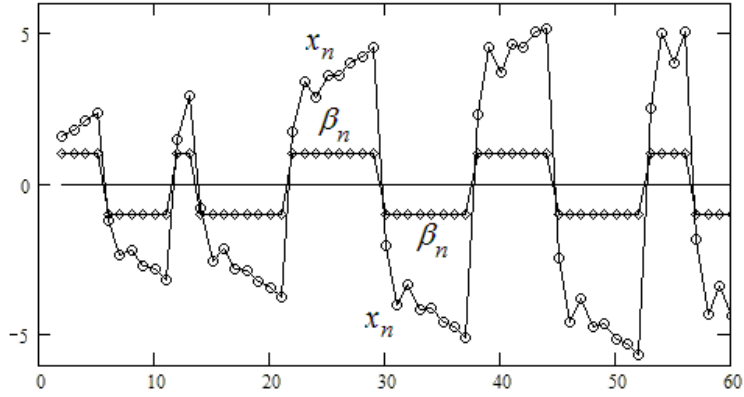}
  \caption{Oscillations in each solution $x_{n}$ exhibit the same pattern as those in the binary sequence $\beta_{n}$ generating it }
  \label{fig:nr-arbit-oscil}
\end{figure}

The method of completing the square discussed in Example \ref{qmtx} can be
applied to \textit{every} quadratic difference equation with real
coefficients. This useful feature enables the calculation of solutions of such
quadratic equations through iteration, a feature that is not shared by
non-recursive difference equations in general. The next result sets the stage
by providing an essential ingredient for the existence theorem.

\begin{lemma}
\label{recls}In the quadratic difference equation%
\begin{equation}
\sum_{i=0}^{k}\sum_{j=i}^{k}a_{i,j,n}x_{n-i}x_{n-j}+\sum_{j=0}^{k}%
b_{j,n}x_{n-j}+c_{n}=0 \label{qslnn}%
\end{equation}
assume that all the coefficients $a_{i,j,n},b_{j,n},c_{n}$ are real numbers
with $a_{0,0,n}\not =0$ for all $n.$ Then $\{x_{n}\}$ is a real solution of
(\ref{qslnn}) if and only if $\{x_{n}\}$ is a real solution of some member of
the family of recursive equations, namely, the recursive class of
(\ref{qslnn})
\begin{equation}
s_{n}=L_{n}(s_{n-1},\ldots,s_{n-k})+\beta_{n}\sqrt{L_{n}^{2}(s_{n-1}%
,\ldots,s_{n-k})+Q_{n}(s_{n-1},\ldots,s_{n-k})} \label{qslnn1}%
\end{equation}
where $\{\beta_{n}\}$ is a fixed but arbitrarily chosen binary sequence with
values in the set $\{-1,1\}$ and for each $n,$
\begin{align}
L_{n}(u_{1},\ldots,u_{k})  &  =\frac{-1}{2a_{0,0,n}}\left(  b_{0,n}+\sum
_{j=1}^{k}a_{0,j,n}u_{j}\right)  ,\label{qslnn2}\\
Q_{n}(u_{1},\ldots,u_{k})  &  =\frac{-1}{a_{0,0,n}}\left(  \sum_{i=1}^{k}%
\sum_{j=i}^{k}a_{i,j,n}u_{i}u_{j}+\sum_{j=1}^{k}b_{j,n}u_{j}+c_{n}\right)  .
\label{qslnn2a}%
\end{align}

\end{lemma}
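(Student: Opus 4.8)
The plan is to treat the left-hand side of (\ref{qslnn}) as a quadratic polynomial in the single variable $x_n$, with the remaining terms $x_{n-1},\ldots,x_{n-k}$ regarded as coefficients, and then to complete the square exactly as in Example \ref{qmtx}.

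First I would isolate the $x_n$-dependence. The only terms of (\ref{qslnn}) that contain $x_n=u_0$ are the $i=0$ summands of the double sum, namely $a_{0,0,n}x_n^2+\sum_{j=1}^k a_{0,j,n}x_n x_{n-j}$, together with the linear term $b_{0,n}x_n$. Grouping these shows that (\ref{qslnn}) is equivalent to
\[
a_{0,0,n}x_n^2 + \Bigl(b_{0,n}+\sum_{j=1}^k a_{0,j,n}x_{n-j}\Bigr)x_n + \Bigl(\sum_{i=1}^k\sum_{j=i}^k a_{i,j,n}x_{n-i}x_{n-j}+\sum_{j=1}^k b_{j,n}x_{n-j}+c_n\Bigr)=0.
\]
Dividing by $a_{0,0,n}\neq 0$ and comparing with (\ref{qslnn2}) and (\ref{qslnn2a}), the coefficient of $x_n$ is exactly $-2L_n$ and the constant term is exactly $-Q_n$ (both evaluated at $(x_{n-1},\ldots,x_{n-k})$), so (\ref{qslnn}) is equivalent to $x_n^2-2L_n x_n-Q_n=0$, that is, to $(x_n-L_n)^2=L_n^2+Q_n$.

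With this normal form in hand, both implications follow quickly. For the forward direction, if $\{x_n\}$ is a real solution then $(x_n-L_n)^2=L_n^2+Q_n$ with $x_n$ and $L_n$ real forces $L_n^2+Q_n\geq 0$, so the square root in (\ref{qslnn1}) is real and $x_n-L_n=\pm\sqrt{L_n^2+Q_n}$; setting $\beta_n=1$ when $x_n\geq L_n$ and $\beta_n=-1$ otherwise produces a binary sequence for which $\{x_n\}$ solves (\ref{qslnn1}). For the converse, if $\{s_n\}$ solves (\ref{qslnn1}) for some fixed $\{\beta_n\}$ then $s_n-L_n=\beta_n\sqrt{L_n^2+Q_n}$; squaring and using $\beta_n^2=1$ recovers $(s_n-L_n)^2=L_n^2+Q_n$, and reversing the algebra (multiplying through by $a_{0,0,n}$) shows $\{s_n\}$ satisfies (\ref{qslnn}).

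The computation is elementary, so there is no deep obstacle; the points that need care are purely bookkeeping. The main one is the correct extraction of the $x_n$-terms from the double sum---in particular separating the diagonal term $a_{0,0,n}x_n^2$ from the mixed terms $a_{0,j,n}x_n x_{n-j}$ before dividing---and then matching the resulting coefficients against the signs and the factor of $2$ already built into the definitions of $L_n$ and $Q_n$. A secondary subtlety in the forward direction is the degenerate case $x_n=L_n$, where $L_n^2+Q_n=0$ and either value of $\beta_n$ is admissible; choosing $\beta_n=1$ (say) removes the ambiguity without affecting the conclusion.
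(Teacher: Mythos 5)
Your proposal is correct and follows essentially the same route as the paper: isolate the $x_n$-terms, divide by $a_{0,0,n}$, recognize the coefficient of $x_n$ as $-2L_n$ and the constant term as $-Q_n$, and complete the square to get $(x_n-L_n)^2=L_n^2+Q_n$. The paper states the final equivalence with (\ref{qslnn1}) more tersely, whereas you spell out the choice of $\beta_n$ and the degenerate case $L_n^2+Q_n=0$, but the argument is the same.
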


\begin{proof}
If%
\[
E_{n}(u_{0},u_{1},\ldots,u_{k})=\sum_{i=0}^{k}\sum_{j=i}^{k}a_{i,j,n}%
u_{i}u_{j}+\sum_{j=1}^{k}b_{j,n}u_{j}+c_{n}%
\]

then using definitions (\ref{qslnn2}) and (\ref{qslnn2a}) we may write%
\begin{equation}
E_{n}(u_{0},u_{1},\ldots,u_{k})=a_{0,0,n}[x_{n}^{2}-2L_{n}(u_{1},\ldots
,u_{k})x_{n}-Q_{n}(u_{1},\ldots,u_{k})] \label{qslnn3}%
\end{equation}

Since $a_{0,0,n}\not =0$ for all $n$, the solution set of (\ref{qslnn}) is
identical to the solution set of
\begin{equation}
x_{n}^{2}-2L_{n}(u_{1},\ldots,u_{k})x_{n}-Q_{n}(u_{1},\ldots,u_{k})=0.
\label{qslnn4}%
\end{equation}

Completing the square in (\ref{qslnn4}) gives%
\[
\lbrack x_{n}-L_{n}(u_{1},\ldots,u_{k})]^{2}=L_{n}^{2}(u_{1},\ldots
,u_{k})+Q_{n}(u_{1},\ldots,u_{k}).
\]

which is equivalent to (\ref{qslnn1}).
\end{proof}

Example \ref{qmtx} provides a quick illustration of the preceding result with
\[
L(u_{1},u_{2})=\frac{a}{2}u_{1},\quad Q(u_{1},u_{2})=bu_{2}^{2}+c
\]
where $L_{n}=L$ and $Q_{n}=Q$ are independent of $n.$

We are now ready to present the existence theorem for real solutions of
(\ref{qslnn}). Let $\{\beta_{n}\}$ be a fixed but arbitrarily chosen binary
sequence in $\{-1,1\}$ and define the functions $L_{n}$ and $Q_{n}$ as in
Lemma \ref{recls}. Further, denote the functions on the right hand side of
(\ref{qslnn1}) by $f_{n},$ i.e.,%
\begin{equation}
f_{n}(u_{1},\ldots,u_{k})=L_{n}(u_{1},\ldots,u_{k})+\beta_{n}\sqrt{L_{n}%
^{2}(u_{1},\ldots,u_{k})+Q_{n}(u_{1},\ldots,u_{k})}. \label{fL2Q}%
\end{equation}

These functions on $\mathbb{R}^{k}$ unfold to the self-maps%
\[
F_{n}(u_{1},\ldots,u_{k})=\left(  f_{n}(u_{1},\ldots,u_{k}),u_{1}%
,\ldots,u_{k-1}\right)  .
\]

We emphasize that each function sequence $\{F_{n}\}$ is determined by a given
or \textit{fixed} binary sequence $\{\beta_{n}\}$ as well as the function
sequences $\{L_{n}\}$ and $\{Q_{n}\}$ that are given by (\ref{qslnn}). Clearly
the functions $f_{n}$ are real-valued at a point $(u_{1},\ldots,u_{k}%
)\in\mathbb{R}^{k}$ if and only if%
\begin{equation}
L_{n}^{2}(u_{1},\ldots,u_{k})+Q_{n}(u_{1},\ldots,u_{k})\geq0. \label{L2Q}%
\end{equation}

\begin{theorem}
\label{xrsol}Assume that the following set is nonemtpy:%
\begin{equation}
\mathcal{S}=%
{\displaystyle\bigcap\limits_{n=0}^{\infty}}
\left\{  (u_{1},\ldots,u_{k})\in\mathbb{R}^{k}:L_{n}^{2}(u_{1},\ldots
,u_{k})+Q_{n}(u_{1},\ldots,u_{k})\geq0\right\}  . \label{pL2Q}%
\end{equation}
(a) The quadratic difference equation (\ref{qslnn}) has a real solution
$\{x_{n}\}_{n=-k+1}^{\infty}$ if and only if the point $P_{0}=(x_{0}%
,x_{-1},\ldots,x_{-k+1})$ is in $\mathcal{S}$ and there is a binary sequence
$\{\beta_{n}\}$ in $\{-1,1\}$ such that the forward orbit of $P_{0}$ under the
associated maps $\{F_{n}\}$ is contained in $\mathcal{S};$ i.e.,%
\[
\{F_{n}\circ F_{n-1}\circ\cdots\circ F_{1}(P_{0})\}_{n=1}^{\infty}%
\subset\mathcal{S}.
\]
(b) If the maps $\{F_{n}\}$ have a nonempty invariant set $M\subset
\mathcal{S}$ for all $n,$ i.e.,
\[
F_{n}(M)\subset M\subset\mathcal{S}%
\]
then the quadratic difference equation (\ref{qslnn}) has real solutions.
\end{theorem}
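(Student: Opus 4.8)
The plan is to reduce everything to Lemma~\ref{recls} and the definition of the unfolded maps $F_n$, so that the existence of a real solution of the quadratic equation (\ref{qslnn}) becomes a statement about forward orbits staying inside the admissible region $\mathcal{S}$. First I would prove part (a). By Lemma~\ref{recls}, a sequence $\{x_n\}$ is a real solution of (\ref{qslnn}) if and only if it is a real solution of (\ref{qslnn1}) for \emph{some} fixed binary sequence $\{\beta_n\}$. Now observe that (\ref{qslnn1}) is precisely the recursion generated by iterating $F_n$: writing $P_{n}=(x_n,x_{n-1},\ldots,x_{n-k+1})$, the statement $P_{n}=F_{n}(P_{n-1})$ is equivalent to $x_n=f_n(x_{n-1},\ldots,x_{n-k})$, which is (\ref{qslnn1}).

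The key point to verify is the equivalence between \emph{real-valuedness} of the iterates and \emph{membership in} $\mathcal{S}$. For each $n$, the quantity $f_n(u_1,\ldots,u_k)$ is a real number if and only if the radicand $L_n^2+Q_n$ is nonnegative, i.e.\ condition (\ref{L2Q}) holds at $(u_1,\ldots,u_k)$. Thus, starting from a real initial point $P_0=(x_0,\ldots,x_{-k+1})$, the first iterate $x_1=f_1(x_0,\ldots,x_{-k+1})$ is real exactly when $P_0$ lies in the $n=1$ slice of the intersection defining $\mathcal{S}$; inductively, $x_{n+1}$ is real exactly when $P_n=F_n\circ\cdots\circ F_1(P_0)$ lies in the $(n{+}1)$-st slice. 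Since $\mathcal{S}$ is the intersection over \emph{all} $n$ of these admissible slices, the entire forward orbit $\{F_n\circ\cdots\circ F_1(P_0)\}_{n\geq1}$ being contained in $\mathcal{S}$ is exactly the condition that guarantees every $x_n$ is a well-defined real number. Combining this with Lemma~\ref{recls} gives both directions of (a): a real solution exists if and only if $P_0\in\mathcal{S}$ and some choice of $\{\beta_n\}$ keeps the orbit inside $\mathcal{S}$.

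Part (b) is then an immediate corollary. If there is a nonempty invariant set $M\subset\mathcal{S}$ with $F_n(M)\subset M$ for all $n$, then choosing any initial point $P_0\in M$, an easy induction shows $F_n\circ\cdots\circ F_1(P_0)\in M\subset\mathcal{S}$ for every $n$, so the hypothesis of part (a) is satisfied and a real solution exists. I would state this induction explicitly but keep it brief.

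The main obstacle is conceptual rather than computational: one must be careful that the binary sequence $\{\beta_n\}$ is \emph{built into} the definition of each $F_n$, so the ``there exists $\{\beta_n\}$'' quantifier in (a) really ranges over the uncountable family of distinct map sequences $\{F_n\}$ indexed by $\{\beta_n\}\in\{-1,1\}^{\mathbb{N}}$. The phrase ``the associated maps $\{F_n\}$'' must be read as ``for some admissible choice of signs,'' and the indexing subtlety (the intersection in $\mathcal{S}$ runs from $n=0$, while the orbit is indexed from $n=1$, reflecting the initial point) needs to be handled consistently. Apart from tracking these indices, both parts follow directly once the orbit-in-$\mathcal{S}$ reformulation is in place.
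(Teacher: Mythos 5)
Your proposal is correct and follows essentially the same route as the paper's own proof: reduce to the recursive family via Lemma~\ref{recls}, identify real-valuedness of each iterate with the radicand condition (\ref{L2Q}), and translate that into containment of the forward orbit in $\mathcal{S}$, with (b) following by induction from invariance of $M$. You are in fact somewhat more explicit than the paper about the slice-by-slice induction and about the fact that the quantifier over $\{\beta_n\}$ ranges over the whole family of map sequences $\{F_n\}$, but the argument is the same.
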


\begin{proof}
(a) By Lemma \ref{recls}, $\{x_{n}\}_{n=-k+1}^{\infty}$ is a real solution of
(\ref{qslnn}) if and only if there is a binary sequence $\{\beta_{n}\}$ in
$\{-1,1\}$ such that $\{x_{n}\}_{n=-k+1}^{\infty}$ is a real solution of the
recursive equation (\ref{qslnn1}). Using the notation in (\ref{fL2Q}),
Eq.(\ref{qslnn1}) can be written as%
\begin{equation}
s_{n}=f_{n}(s_{n-1},\ldots,s_{n-k}). \label{sL2Q}%
\end{equation}

Now the forward orbit of the solution $\{x_{n}\}_{n=-k+1}^{\infty}$ of
(\ref{sL2Q}) is the sequence
\[
\mathcal{O}=\{F_{n}\circ F_{n-1}\circ\cdots\circ F_{1}(P_{0})\}_{n=1}^{\infty
}=\{(x_{n-1},\ldots,x_{n-k})\}_{n=1}^{\infty}%
\]

in $\mathbb{R}^{k}$ that starts from $P_{0}\in\mathcal{S}.$ It is clear from
the definition of $T$ that each $x_{n}$ is real if and only if $\mathcal{O}%
\subset\mathcal{S}.$ This observation completes the proof of (a).

(b) Let $P_{0}\in M.$ Then $\{F_{n}\circ F_{n-1}\circ\cdots\circ F_{1}%
(P_{0})\}_{n=1}^{\infty}\subset M\subset\mathcal{S}$ so by (a)
\[
\{(x_{n-1},\ldots,x_{n-k})\}_{n=1}^{\infty}\subset\mathcal{S}.
\]

It follows that each $x_{n}$ is real and thus, $\{x_{n}\}_{n=-k+1}^{\infty}$
of (\ref{sL2Q}). An application of Lemma \ref{recls} now completes the proof.
\end{proof}

If $\mathcal{S}$ has any invariant subset $M=M(\{\beta_{n}\})$ (relative to
some binary sequence $\{\beta_{n}\}$ in $\{-1,1\}$ or equivalently, to some
map sequence $\{F_{n}\}$) then the union of all such invariant sets in
$\mathcal{S}$ is again invariant relative to all relevant binary sequences
$\{\beta_{n}\}$ (or map sequences $\{F_{n}\}$). Invariant sets may exist
(i.e., they are nonempty) for some binary sequences $\{\beta_{n}\}$ in
$\{-1,1\}$ and not others. However, the union of all invariant sets,
\[
\mathcal{M}=%
{\displaystyle\bigcup}
\{M(\{\beta_{n}\}):\{\beta_{n}\}\text{ is a binary sequence in }\{-1,1\}\}
\]
is the largest or maximal invariant set in $\mathcal{S}$ and as such,
$\mathcal{M}$ is unique. In particular, if $\mathcal{S}$ is invariant relative
to some binary sequence then $\mathcal{M=S}.$

We refer to $\mathcal{M}$ as the \textit{state-space of real solutions} of
(\ref{qslnn}). The existence of a (nonempty) $\mathcal{M}$ may signal the
occurrence of a variety of solutions for (\ref{qslnn}). In Example \ref{qmtx},
where $\mathcal{S}=\mathbb{R}^{2}$ is trivially invariant (so that
$\mathcal{M=}\mathbb{R}^{2}$) we observed the occurrence of a wide variety of
oscillatory behaviors. Generally, when $\mathcal{S}=\mathbb{R}^{k}$ every
solution of (\ref{qslnn}) with its initial point $P_{0}\in\mathbb{R}^{k}$ is a
real solution. The next result presents sufficient conditions that imply
$\mathcal{S}=\mathbb{R}^{k}$. For cases where $\mathcal{S}$ is a proper subset
of $\mathbb{R}^{k}$\ see \cite{hsfsorbk}.

\begin{corollary}
The state-space of real solutions of (\ref{qslnn}) is $\mathbb{R}^{k}$ if the
following conditions hold for all $n$:%
\begin{align}
A_{j,n}  &  >0,\text{ }B_{i,j,n}=0,\label{rkc0}\\
\sum_{j=1}^{k}\frac{C_{j,n}^{2}}{A_{j,n}}  &  \leq b_{0,n}^{2}-4a_{0,0,n}c_{n}
\label{rkc}%
\end{align}
where for $k\geq2$, $j=1,\ldots,k$, and $n\geq0$%
\begin{align*}
A_{j,n}  &  =a_{0,j,n}^{2}-4a_{0,0,n}a_{j,j,n},\\
B_{i,j,n}  &  =a_{0,i,n}a_{0,j,n}-2a_{0,0,n}a_{i,j,n},\ i<j,\\
C_{j,n}  &  =a_{0,j,n}b_{0,n}-2a_{0,0,n}b_{j,n}.
\end{align*}

\end{corollary}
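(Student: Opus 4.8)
The plan is to show that conditions (\ref{rkc0}) and (\ref{rkc}) force the defining inequality $L_n^2+Q_n\ge 0$ to hold at \emph{every} point of $\mathbb{R}^k$ and for every $n\ge 0$; by the definition (\ref{pL2Q}) of $\mathcal{S}$ this gives $\mathcal{S}=\mathbb{R}^k$. Once this is established, $\mathcal{S}$ is trivially invariant under any map sequence $\{F_n\}$, so by the discussion preceding the corollary the maximal invariant set is $\mathcal{M}=\mathcal{S}=\mathbb{R}^k$, which is the asserted conclusion. Thus the whole problem reduces to a pointwise nonnegativity statement about the single function $L_n^2+Q_n$.

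To analyze that function I would first clear denominators. Since $a_{0,0,n}\ne 0$, the factor $4a_{0,0,n}^2$ is strictly positive, so $L_n^2+Q_n\ge 0$ is equivalent to $\Phi_n\ge 0$, where $\Phi_n:=4a_{0,0,n}^2(L_n^2+Q_n)$. Using the explicit formulas (\ref{qslnn2}) and (\ref{qslnn2a}), I would expand $\Phi_n$ and regroup it as a quadratic polynomial in $(u_1,\ldots,u_k)$. The one bookkeeping step here is to verify that the coefficient of $u_j^2$ comes out to be exactly $A_{j,n}$, that of $u_iu_j$ (for $i<j$) to be $2B_{i,j,n}$, that of $u_j$ to be $2C_{j,n}$, and the constant term to be $b_{0,n}^2-4a_{0,0,n}c_n$; this is precisely why the corollary introduces those three abbreviations. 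The outcome is
\[
\Phi_n=\sum_{j=1}^k A_{j,n}u_j^2+2\sum_{i<j}B_{i,j,n}u_iu_j+2\sum_{j=1}^k C_{j,n}u_j+\left(b_{0,n}^2-4a_{0,0,n}c_n\right).
\]

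Next I would invoke the hypotheses. Condition (\ref{rkc0}) gives $B_{i,j,n}=0$, which annihilates all cross terms and leaves $\Phi_n$ as a sum of independent one-variable quadratics $A_{j,n}u_j^2+2C_{j,n}u_j$ plus a constant. Since $A_{j,n}>0$, each such quadratic completes to $A_{j,n}\bigl(u_j+C_{j,n}/A_{j,n}\bigr)^2-C_{j,n}^2/A_{j,n}$, so that
\[
\Phi_n=\sum_{j=1}^k A_{j,n}\left(u_j+\frac{C_{j,n}}{A_{j,n}}\right)^2+\left(b_{0,n}^2-4a_{0,0,n}c_n-\sum_{j=1}^k\frac{C_{j,n}^2}{A_{j,n}}\right).
\]
The squared terms are nonnegative because $A_{j,n}>0$, so $\Phi_n$ is bounded below by its constant part, and condition (\ref{rkc}) is exactly the statement that this constant part is nonnegative. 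Hence $\Phi_n\ge 0$ on all of $\mathbb{R}^k$ for every $n$, as required.

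The structural core of the argument—diagonalization via $B_{i,j,n}=0$ followed by completing the square under $A_{j,n}>0$—is routine. The only place demanding care is the expansion in the second paragraph: one must track the factors of $2$ and $4$ so that the collected coefficients match $A_{j,n}$, $B_{i,j,n}$, and $C_{j,n}$ with the correct multipliers. I expect no genuine obstacle beyond this bookkeeping, since the positivity of $4a_{0,0,n}^2$ makes the passage to $\Phi_n$ sign-preserving and the two hypotheses are tailored to deliver exactly the diagonal-plus-nonnegative-constant form.
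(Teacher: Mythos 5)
Your proposal is correct and follows essentially the same route as the paper's proof: multiply $L_n^2+Q_n$ by the positive quantity $4a_{0,0,n}^2$, identify the coefficients as $A_{j,n}$, $2B_{i,j,n}$, $2C_{j,n}$ with constant $b_{0,n}^2-4a_{0,0,n}c_n$, drop the cross terms via $B_{i,j,n}=0$, and complete the square to reduce the inequality to condition (\ref{rkc}). The coefficient bookkeeping you flag as the only delicate point does check out exactly as you state it.
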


\begin{proof}
By straightforward calculation the inequality $L_{n}^{2}+Q_{n}\geq0,$ i.e.,
(\ref{L2Q}) is seen to be equivalent to%
\begin{equation}
\sum_{j=1}^{k}A_{j,n}u_{j}^{2}+2\sum_{i=1}^{k-1}\sum_{j=i+1}^{k}B_{i,j,n}%
u_{i}u_{j}+2\sum_{j=1}^{k}C_{j,n}u_{j}+b_{0,n}^{2}\geq4a_{0,0,n}c_{n}
\label{oL2Q}%
\end{equation}
for all $(u_{1},\ldots,u_{k})\in\mathbb{R}^{k}$ with the coefficients
$A_{j,n}$, $B_{i,j,n}$ and $C_{j,n}$ as defined in the statement of the
corollary. By conditions (\ref{rkc}) the double summation term in (\ref{oL2Q})
drops out and we may complete the squares in the remaining terms to obtain the
inequality%
\[
\sum_{j=1}^{k}A_{j,n}\left(  u_{j}+\frac{C_{j,n}}{A_{j,n}}\right)  ^{2}%
\geq\sum_{j=1}^{k}\frac{C_{j,n}^{2}}{A_{j,n}}-b_{0,n}^{2}+4a_{0,0,n}c_{n}.
\]

By (\ref{rkc0}) the left hand side of the above inequality is non-negative
while its right hand side is non-positive so (\ref{oL2Q}) holds under
conditions (\ref{rkc0}) and (\ref{rkc}). The proof is completed by applying
Theorem \ref{xrsol}.
\end{proof}

The next result, which generalizes Example \ref{qmtx}, is obtained by an
immediate application of the above corollary to the non-homogeneous quadratic
equation of order two with constant coefficients.

\begin{corollary}
\label{o2rss}The quadratic difference equation%
\begin{gather*}
x_{n}^{2}+a_{0,1}x_{n}x_{n-1}+a_{0,2}x_{n}x_{n-2}+a_{1,1}x_{n-1}^{2}%
+a_{1,2}x_{n-1}x_{n-2}+\\
\qquad\qquad+a_{2,2}x_{n-2}^{2}+b_{0}x_{n}+b_{1}x_{n-1}+b_{2}x_{n-2}+c_{n}=0
\end{gather*}
has $\mathbb{R}^{2}$ as a state-space of real solutions if the following
conditions are satisfied:%
\begin{align*}
a_{1,1}  &  <\frac{a_{0,1}^{2}}{4},\quad a_{2,2}<\frac{a_{0,2}^{2}}{4},\quad
a_{1,2}=\frac{1}{2}a_{0,1}a_{0,2}\text{ and}\\
c_{n}  &  \leq\frac{b_{0}^{2}}{4}-\frac{(a_{0,1}b_{0}-2b_{1})^{2}}%
{4(a_{0,1}^{2}-4a_{1,1})}-\frac{(a_{0,2}b_{0}-2b_{2})^{2}}{4(a_{0,2}%
^{2}-4a_{2,2})}\text{ for all }n.
\end{align*}

\end{corollary}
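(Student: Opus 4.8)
The plan is to recognize the displayed equation as the order-two (that is, $k=2$) specialization of the general quadratic equation (\ref{qslnn}) and then to invoke the preceding corollary verbatim. No new idea is needed: the entire content is a careful matching of coefficients followed by a routine evaluation of the auxiliary quantities $A_{j,n}$, $B_{i,j,n}$ and $C_{j,n}$ and a rearrangement of the resulting inequality.

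First I would read off the coefficient identifications. Comparing the displayed equation with (\ref{qslnn}) for $k=2$, the leading coefficient is $a_{0,0,n}=1$ for every $n$, the remaining quadratic coefficients are the constants $a_{0,1,n}=a_{0,1}$, $a_{0,2,n}=a_{0,2}$, $a_{1,1,n}=a_{1,1}$, $a_{1,2,n}=a_{1,2}$, $a_{2,2,n}=a_{2,2}$, the linear coefficients are $b_{0,n}=b_0$, $b_{1,n}=b_1$, $b_{2,n}=b_2$, and only the constant term $c_n$ is allowed to vary with $n$. In particular $a_{0,0,n}=1\neq 0$ for all $n$, so the standing hypothesis of the preceding corollary is met.

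Next I would compute the three families of auxiliary quantities from their definitions. Because $a_{0,0,n}=1$, they collapse to the $n$-independent expressions $A_{1,n}=a_{0,1}^2-4a_{1,1}$, $A_{2,n}=a_{0,2}^2-4a_{2,2}$, $B_{1,2,n}=a_{0,1}a_{0,2}-2a_{1,2}$, $C_{1,n}=a_{0,1}b_0-2b_1$ and $C_{2,n}=a_{0,2}b_0-2b_2$. The requirements $A_{1,n}>0$ and $A_{2,n}>0$ from (\ref{rkc0}) then read exactly $a_{1,1}<a_{0,1}^2/4$ and $a_{2,2}<a_{0,2}^2/4$, while $B_{1,2,n}=0$ becomes $a_{1,2}=\tfrac12 a_{0,1}a_{0,2}$; these recover the first three hypotheses of the corollary, and since the quantities are independent of $n$ the ``for all $n$'' clause is automatic for them.

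Finally I would translate the summation inequality (\ref{rkc}). Substituting the computed quantities together with $b_{0,n}^2-4a_{0,0,n}c_n=b_0^2-4c_n$, it reads
\[
\frac{(a_{0,1}b_0-2b_1)^2}{a_{0,1}^2-4a_{1,1}}+\frac{(a_{0,2}b_0-2b_2)^2}{a_{0,2}^2-4a_{2,2}}\leq b_0^2-4c_n,
\]
and dividing by $4$ and isolating $c_n$ yields precisely the stated bound, which must hold for all $n$ since $c_n$ varies. With all four hypotheses of the preceding corollary thereby verified, that corollary gives $\mathcal{S}=\mathbb{R}^2$, hence $\mathcal{M}=\mathbb{R}^2$, which is the asserted conclusion. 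There is no genuine obstacle here; the only point requiring mild care is to keep the normalization $a_{0,0,n}=1$ in view so that the factors of $2$ and $4$ appearing in $A_{j,n}$, $B_{i,j,n}$ and $C_{j,n}$ are tracked correctly when the final inequality is divided through by $4$ to solve for $c_n$.
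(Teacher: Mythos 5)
Your proposal is correct and follows exactly the route the paper intends: the paper presents Corollary \ref{o2rss} as "an immediate application of the above corollary" (the one giving $\mathcal{S}=\mathbb{R}^{k}$ under (\ref{rkc0}) and (\ref{rkc})), and your coefficient matching with $a_{0,0,n}=1$, the resulting expressions for $A_{j,n}$, $B_{1,2,n}$, $C_{j,n}$, and the rearrangement of (\ref{rkc}) into the stated bound on $c_{n}$ all check out.
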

%
%
\subsection{Factorization of quadratic equations}

In the remainder of this paper we investigate conditions for the possible
existence of a special semi-invertible form symmetry for quadratic difference
equations in which both $\mu_{n}$ and $\theta_{n}$ are \textit{linear} maps on
$\mathcal{F}$ for all $n$; hence, this is called a \textit{linear form
symmetry}. This is a first step in a broader study of the latter type of
equation on algebraic fields; see \cite{hsfsorbk} for further discussion.

To simplify the discussion without losing sight of essential ideas, we limit
attention to the case $k=2,$ i.e., the second-order equation
\begin{equation}
E_{n}(x_{n},x_{n-1},x_{n-2})=0 \label{nrqe}%
\end{equation}
on a field $\mathcal{F}$ where%
\begin{align}
E_{n}(u_{0},u_{1},u_{2})  &  =a_{0,0,n}u_{0}^{2}+a_{0,1,n}u_{0}u_{1}%
+a_{0,2,n}u_{0}u_{2}+\label{nrq2}\\
&  \quad a_{1,0,n}u_{1}^{2}+a_{1,2,n}u_{1}u_{2}+a_{2,0,n}u_{2}^{2}+\nonumber\\
&  \qquad b_{0,n}u_{0}+b_{1,n}u_{1}+b_{2,n}u_{2}+c_{n}.\nonumber
\end{align}

In this expression, to assure that $u_{0}$ (corresponding to the $x_{n}$ term)
does \textit{not} drop out, we may assume that for each $n,$%
\begin{equation}
\text{if }a_{0,0,n}=a_{0,1,n}=b_{0,n}=0\text{ or }a_{1,2,n}=a_{2,0,n}%
=b_{2,n}=0,\text{ then }a_{0,2,n}\not =0.\nonumber
\end{equation}

If the quadratic expression (\ref{nrq2}) has a form symmetry $\{H_{n}\}$ where%
\[
H_{n}(u_{0},u_{1},u_{2})=[h_{n}(u_{0},u_{1}),h_{n-1}(u_{1},u_{2})]
\]
then there are sequences $\{\phi_{n}\},\{h_{n}\}$ of functions $\phi_{n}%
,h_{n}:\mathcal{F}^{2}\rightarrow\mathcal{F}$ such that for all $(u_{0}%
,u_{1},u_{2})\in\mathcal{F}^{3},$
\begin{equation}
\phi_{n}(h_{n}(u_{0},u_{1}),h_{n-1}(u_{1},u_{2}))=E_{n}(u_{0},u_{1},u_{2}).
\label{nrfo2}%
\end{equation}

The form symmetry $\{H_{n}\}$ is semi-invertible if there are bijection
$\mu_{n},\theta_{n}:\mathcal{F}\rightarrow\mathcal{F}$ such that
\[
h_{n}(u,v)=\mu_{n}(u)+\theta_{n}(v).
\]

The next corollary of Theorem \ref{seminv} presents conditions that imply the
existence of a linear form symmetry for Eq.(\ref{nrqe}).

\begin{corollary}
\label{qlin}The quadratic difference equation (\ref{nrqe}) has the linear form
symmetry
\begin{equation}
h_{n}(u,v)=\alpha_{n}u+v,\quad\alpha_{n}\not =0\text{ for all }n \label{lnhn}%
\end{equation}
if and only if a sequence $\{\alpha_{n}\}$ exists in the field $\mathcal{F}$
such that all four of the following first-order equations are satisfied:%
\begin{align}
a_{0,0,n}-a_{0,1,n}\alpha_{n}+a_{1,1,n}\alpha_{n}^{2}+a_{0,2,n}\alpha
_{n}\alpha_{n-1}-a_{1,2,n}\alpha_{n}^{2}\alpha_{n-1}+a_{2,2,n}\alpha_{n}^{2}
\alpha_{n-1}^{2} & =0\label{qlin1}\\
a_{0,1,n}-2a_{1,1,n}\alpha_{n}-a_{0,2,n}\alpha_{n-1}+2a_{1,2,n}\alpha
_{n}\alpha_{n-1}-2a_{2,2,n}\alpha_{n}\alpha_{n-1}^{2} & =0\label{qlin2}\\
a_{0,2,n}-a_{1,2,n}\alpha_{n}+2a_{2,2,n}\alpha_{n}\alpha_{n-1} & =0
\label{qlin3}\\
b_{0,n}-b_{1,n}\alpha_{n}+b_{2,n}\alpha_{n}\alpha_{n-1} & =0. \label{qlin4}%
\end{align}
In this case, Eq.(\ref{nrqe}) has a factorization with a first-order factor
equation%
\begin{gather*}
(a_{1,1,n}-a_{1,2,n}\alpha_{n-1}+a_{2,2,n}\alpha_{n-1}^{2})t_{n}^{2}%
+a_{2,2,n}t_{n-1}^{2}+(a_{1,2,n}-2a_{2,2,n}\alpha_{n-1})t_{n}t_{n-1}\\
+(b_{1,n}-b_{2,n}\alpha_{n-1})t_{n}+b_{2,n}t_{n-1}+c_{n}=0.
\end{gather*}
and a cofactor equation $\alpha_{n}x_{n}+x_{n-1}=t_{n}$ also of order one.
\end{corollary}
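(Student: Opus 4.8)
The plan is to apply Theorem~\ref{seminv} (the semi-invertible map criterion) directly, with the specialization $\mu_n(u)=\alpha_n u$ and $\theta_n(v)=v$, so that $h_n(u,v)=\alpha_n u+v$ is right semi-invertible precisely because $\theta_n$ is the identity (a bijection) as long as we need $\alpha_n\neq 0$ only for the cofactor to be recoverable. First I would compute the auxiliary functions $\zeta_{1,n}$ and $\zeta_{2,n}$ from the defining recursion \eqref{nrzetaj}. Since $\theta_{n}^{-1}(v)=v$ here, we get $\zeta_{1,n}(u_0,v_1)=v_1-\alpha_n u_0$ and then $\zeta_{2,n}(u_0,v_1,v_2)=v_2-\alpha_{n-1}\zeta_{1,n}=v_2-\alpha_{n-1}(v_1-\alpha_n u_0)$. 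By Theorem~\ref{seminv}, the linear form symmetry \eqref{lnhn} exists if and only if the quantity $E_n(u_0,\zeta_{1,n},\zeta_{2,n})$ is independent of $u_0$ for all $n$.

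The core of the argument is therefore to substitute $u_0$, $\zeta_{1,n}$, $\zeta_{2,n}$ into the quadratic form \eqref{nrq2} and collect terms as a polynomial in $u_0$. Because $E_n$ is quadratic and each $\zeta_{j,n}$ is affine in $u_0$, the result is a quadratic polynomial $A(v_1,v_2)u_0^2+B(v_1,v_2)u_0+C(v_1,v_2)$ whose coefficients $A$ and $B$ must vanish identically in $v_1,v_2$ for independence of $u_0$ to hold. The coefficient $A$ of $u_0^2$ is a constant (the $v$'s enter $u_0^2$-free terms), giving a single scalar equation; this is exactly \eqref{qlin1}. The coefficient $B$ of $u_0^1$ is affine in $(v_1,v_2)$, so requiring $B\equiv 0$ splits into three scalar equations according to the constant, $v_1$, and $v_2$ parts: the $v_2$-coefficient yields \eqref{qlin3}, the $v_1$-coefficient yields \eqref{qlin2}, and the constant part yields \eqref{qlin4}. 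Thus the four displayed conditions \eqref{qlin1}--\eqref{qlin4} are together equivalent to the vanishing of $A$ and $B$, establishing the biconditional.

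For the explicit factor and cofactor equations, I would read off the factor function from \eqref{phinz}, namely $\phi_n(v_1,v_2)=E_n(u_0,\zeta_{1,n},\zeta_{2,n})=C(v_1,v_2)$ evaluated once the $u_0$-dependence has dropped out; setting $v_1=t_n$, $v_2=t_{n-1}$ and simplifying the composition $E_n(u_0,\,t_n-\alpha_n u_0,\,t_{n-1}-\alpha_{n-1}(t_n-\alpha_n u_0))$ produces the stated quadratic in $t_n,t_{n-1}$. The cofactor equation is simply the substitution $h_n(x_n,x_{n-1})=\alpha_n x_n+x_{n-1}=t_n$ from \eqref{nrk1cf}, and Theorem~\ref{nrfacthm} (via the form symmetry route in Theorem~\ref{seminv}) guarantees that this factor--cofactor system is equivalent to \eqref{nrqe}.

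The main obstacle is purely bookkeeping rather than conceptual: the substitution of the affine expressions for $\zeta_{1,n}$ and $\zeta_{2,n}$ into the six quadratic monomials of \eqref{nrq2} generates many cross terms, and one must carefully track which terms contribute to the $u_0^2$, $u_0 v_1$, $u_0 v_2$, and $u_0$-constant parts to match them correctly against \eqref{qlin1}--\eqref{qlin4}. In particular the factor $a_{0,2,n}u_0 u_2$ contributes to both the $u_0^2$ equation (through the $\alpha_n\alpha_{n-1}$ term inside $\zeta_{2,n}$) and to the linear-in-$u_0$ equations, so one must be attentive not to double-count or omit these contributions; verifying that the collected coefficients reproduce exactly the four stated equations, and that the residual $u_0$-free part matches the displayed factor equation, is where the calculation requires care.
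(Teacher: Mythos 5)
Your proposal is correct and follows essentially the same route as the paper's proof: both specialize Theorem \ref{seminv} with $\mu_{n}(u)=\alpha_{n}u$, $\theta_{n}(v)=v$, compute $\zeta_{1,n}=v_{1}-\alpha_{n}u_{0}$ and $\zeta_{2,n}=v_{2}-\alpha_{n-1}v_{1}+\alpha_{n}\alpha_{n-1}u_{0}$, and obtain (\ref{qlin1})--(\ref{qlin4}) by setting the coefficients of $u_{0}^{2}$, $u_{0}v_{1}$, $u_{0}v_{2}$ and $u_{0}$ to zero, with the $u_{0}$-free residue giving the factor equation. Your organization of the linear-in-$u_{0}$ coefficient into its $v_{1}$, $v_{2}$ and constant parts matches exactly how the paper's expansion sorts the four conditions.
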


\begin{proof}
For any nonzero sequence $\{\alpha_{n}\}$ in $\mathcal{F}$ the functions
$h_{n}$ defined by (\ref{lnhn}) are semi-invertible with $\mu_{n}%
(u)=\alpha_{n}u$ and $\theta_{n}(v)=v$ for all $n.$ Note that $\theta_{n}%
^{-1}=\theta_{n}$ for all $n$ and the group structure is the additive group of
the field so the quantities $\zeta_{j,n}$ in Theorem \ref{seminv} take the
forms%
\begin{align*}
\zeta_{1,n}  &  =\zeta_{1,n}(u_{0},v_{1})=-\alpha_{n}u_{0}+v_{1},\\
\zeta_{2,n}  &  =\zeta_{2,n}(u_{0},v_{1},v_{2})=\alpha_{n}\alpha_{n-1}%
u_{0}-\alpha_{n-1}v_{1}+v_{2}.
\end{align*}

By Theorem \ref{seminv}, Eq.(\ref{nrqe}) has the linear form symmetry if and
only if the expression $E_{n}=E_{n}(u_{0},\zeta_{1,n},\zeta_{2,n})$ is
independent of $u_{0}$ for all $n.$ Now%
\begin{align*}
E_{n}  &  =a_{0,0,n}u_{0}^{2}+a_{0,1,n}u_{0}(v_{1}-\alpha_{n}u_{0})+\\
&  a_{0,2,n}u_{0}(\alpha_{n}\alpha_{n-1}u_{0}-\alpha_{n-1}v_{1}+v_{2}%
)+a_{1,1,n}(v_{1}-\alpha_{n}u_{0})^{2}+\\
&  a_{1,2,n}(v_{1}-\alpha_{n}u_{0})(\alpha_{n}\alpha_{n-1}u_{0}-\alpha
_{n-1}v_{1}+v_{2})+\\
&  a_{2,2,n}(\alpha_{n}\alpha_{n-1}u_{0}-\alpha_{n-1}v_{1}+v_{2})^{2}%
+b_{0,n}u_{0}+\\
&  b_{1,n}(v_{1}-\alpha_{n}u_{0})+b_{2,n}(\alpha_{n}\alpha_{n-1}u_{0}%
-\alpha_{n-1}v_{1}+v_{2})+c_{n}.
\end{align*}

Multiplying terms in the above expression gives%
\begin{align*}
E_{n}  &  =a_{0,0,n}u_{0}^{2}+a_{0,1,n}u_{0}v_{1}-a_{0,1,n}\alpha_{n}u_{0}%
^{2}+a_{0,2,n}\alpha_{n}\alpha_{n-1}u_{0}^{2}-\\
&  a_{0,2,n}\alpha_{n-1}u_{0}v_{1}+a_{0,2,n}u_{0}v_{2}+a_{1,1,n}v_{1}%
^{2}-2a_{1,1,n}\alpha_{n}u_{0}v_{1}+\\
&  a_{1,1,n}\alpha_{n}^{2}u_{0}^{2}+a_{1,2,n}\alpha_{n}\alpha_{n-1}u_{0}%
v_{1}-a_{1,2,n}\alpha_{n-1}v_{1}^{2}+a_{1,2,n}v_{1}v_{2}-\\
&  a_{1,2,n}\alpha_{n}^{2}\alpha_{n-1}u_{0}^{2}+a_{1,2,n}\alpha_{n}%
\alpha_{n-1}u_{0}v_{1}-a_{1,2,n}\alpha_{n}u_{0}v_{2}+\\
&  a_{2,2,n}\alpha_{n}^{2}\alpha_{n-1}^{2}u_{0}^{2}-2a_{2,2,n}\alpha_{n}%
\alpha_{n-1}^{2}u_{0}v_{1}+2a_{2,2,n}\alpha_{n}\alpha_{n-1}u_{0}v_{2}+\\
&  a_{2,2,n}(\alpha_{n-1}v_{1}-v_{2})^{2}+b_{0,n}u_{0}+b_{1,n}v_{1}%
-b_{1,n}\alpha_{n}u_{0}+\\
&  b_{2,n}\alpha_{n}\alpha_{n-1}u_{0}-b_{2,n}(\alpha_{n-1}v_{1}-v_{2})+c_{n}.
\end{align*}

Terms containing $u_{0}$ or $u_{0}^{2}$ must sum to zeros. Rearranging terms
in the preceding expression gives%
\begin{align*}
E_{n}  &  =(a_{0,0,n}-a_{0,1,n}\alpha_{n}+a_{0,2,n}\alpha_{n}\alpha
_{n-1}+a_{1,1,n}\alpha_{n}^{2}-a_{1,2,n}\alpha_{n}^{2}\alpha_{n-1}+\\
&  a_{2,2,n}\alpha_{n}^{2}\alpha_{n-1}^{2})u_{0}^{2}+(a_{0,1,n}-a_{0,2,n}%
\alpha_{n-1}-2a_{1,1,n}\alpha_{n}+\\
&  2a_{1,2,n}\alpha_{n}\alpha_{n-1}-2a_{2,2,n}\alpha_{n}\alpha_{n-1}^{2}%
)u_{0}v_{1}+(a_{0,2,n}-a_{1,2,n}\alpha_{n}+\\
&  2a_{2,2,n}\alpha_{n}\alpha_{n-1})u_{0}v_{2}+(b_{0,n}-b_{1,n}\alpha
_{n}+b_{2,n}\alpha_{n}\alpha_{n-1})u_{0}+\\
&  (a_{1,1,n}-a_{1,2,n}\alpha_{n-1}+a_{2,2,n}\alpha_{n-1}^{2})v_{1}%
^{2}+a_{2,2,n}v_{2}^{2}+(a_{1,2,n}-\\
&  2a_{2,2,n}\alpha_{n-1})v_{1}v_{2}+(b_{1,n}-b_{2,n}\alpha_{n-1}%
)v_{1}+b_{2,n}v_{2}+c_{n}.
\end{align*}

Setting the coefficients of variable terms containing $u_{0}$ equal to zeros
gives the four first-order equations (\ref{qlin1})-(\ref{qlin4}). The part of
$E_{n}$ above that does not vanish yields the factor functions
\begin{align*}
\phi_{n}(v_{1},v_{2})  &  =(a_{1,1,n}-a_{1,2,n}\alpha_{n-1}+a_{2,2,n}%
\alpha_{n-1}^{2})v_{1}^{2}+a_{2,2,n}v_{2}^{2}+\\
&  \qquad (a_{1,2,n}-2a_{2,2,n}\alpha_{n-1})v_{1}v_{2}+(b_{1,n}-b_{2,n}\alpha
_{n-1})v_{1}+b_{2,n}v_{2}+c_{n}.
\end{align*}

This expression plus the linear cofactor $t_{n}=h_{n}(x_{n},x_{n-1}%
)=\alpha_{n}x_{n}+x_{n-1}$ give the stated factorization.
\end{proof}

An immediate consequence of Corollary \ref{qlin} is that every second-order,
non-homogeneous linear difference equation%
\[
x_{n}+b_{1,n}x_{n-1}+b_{2,n}x_{n-2}+c_{n}=0
\]
has a linear form symmetry and the corresponding factorization into a pair of
equations of order one (also non-homogeneous, linear) if and only if the
first-order difference equation (\ref{qlin4}) has a solution $\{\alpha_{n}\}$
in $\mathcal{F}\backslash\{0\}$. The existence of a linear form symmetry for
non-homogeneous linear equations of \textit{all orders} can be established by
a calculation similar to that in the proof of Corollary \ref{qlin}. However,
as noted earlier, a complete proof for the general case is already given in
\cite{hstdfs} using the semiconjugate factorization method which applies to
linear equations because they are recursive. Therefore, we need not consider
the linear case any further here.

If all coefficients in $E_{n}(u_{0},u_{1},u_{2})$ are constants except
possibly the free term $c_{n}$ then a simpler version of Corollary \ref{qlin}
is obtained as follows.

\begin{corollary}
\label{cqlin}The quadratic difference equation with constant coefficients%
\begin{gather}
a_{0,0}x_{n}^{2}+a_{0,1}x_{n}x_{n-1}+a_{0,2}x_{n}x_{n-2}+a_{1,1}x_{n-1}%
^{2}+a_{1,2}x_{n-1}x_{n-2}\nonumber\\
+a_{2,2}x_{n-2}^{2}+b_{0}x_{n}+b_{1}x_{n-1}+b_{2}x_{n-2}+c_{n}=0\label{ccqlin}%
\end{gather}
in a non-trivial field $\mathcal{F}$ has the linear form symmetry with
$h(u,v)=\alpha u+v$ if and only if the following polynomials have a common
nonzero root $\alpha$ in $\mathcal{F}$:%
\begin{align}
a_{0,0}-a_{0,1}\alpha+(a_{1,1}+a_{0,2})\alpha^{2}-a_{1,2}\alpha^{3}%
+a_{2,2}\alpha^{4} &  =0,\label{cqlin1}\\
a_{0,1}-(a_{0,2}+2a_{1,1})\alpha+2a_{1,2}\alpha^{2}-2a_{2,2}\alpha^{3} &
=0,\label{cqlin2}\\
a_{0,2}-a_{1,2}\alpha+2a_{2,2}\alpha^{2} &  =0,\label{cqlin3}\\
b_{0}-b_{1}\alpha+b_{2}\alpha^{2} &  =0.\label{cqlin4}%
\end{align}
If such a root $\alpha\not =0$ exists then Eq.(\ref{ccqlin}) has the
factorization%
\begin{gather*}
(a_{1,1}-a_{1,2}\alpha+a_{2,2}\alpha^{2})t_{n}^{2}+(a_{1,2}-2a_{2,2}%
\alpha)t_{n}t_{n-1}+a_{2,2}t_{n-1}^{2}\\
+(b_{1}-b_{2}\alpha)t_{n}+b_{2}t_{n-1}+c_{n}=0,\\
x_{n}=-\frac{1}{\alpha}x_{n-1}+\frac{t_{n}}{\alpha}.
\end{gather*}

\end{corollary}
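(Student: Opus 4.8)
The plan is to obtain Corollary \ref{cqlin} as a direct specialization of Corollary \ref{qlin} to the autonomous case. The constant form symmetry $h(u,v)=\alpha u+v$ corresponds, in the notation of Corollary \ref{qlin}, to the \emph{constant} sequence $\alpha_n=\alpha$ for all $n$ together with the constant coefficients $a_{i,j,n}=a_{i,j}$ and $b_{j,n}=b_j$. Since Corollary \ref{qlin} is already an equivalence, it suffices to show that under these substitutions the four first-order difference equations (\ref{qlin1})--(\ref{qlin4}) collapse to the four polynomial equations (\ref{cqlin1})--(\ref{cqlin4}) in the single unknown $\alpha$.

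First I would carry out the substitution $\alpha_n=\alpha_{n-1}=\alpha$ in each of (\ref{qlin1})--(\ref{qlin4}) and collect like powers of $\alpha$. For instance, in (\ref{qlin1}) the product $a_{0,2,n}\alpha_n\alpha_{n-1}$ becomes $a_{0,2}\alpha^2$, which combines with the term $a_{1,1}\alpha^2$ to produce the coefficient $(a_{1,1}+a_{0,2})$ of $\alpha^2$ in (\ref{cqlin1}), while the remaining terms pass through verbatim to give the quartic (\ref{cqlin1}). The analogous bookkeeping turns (\ref{qlin2}) into the cubic (\ref{cqlin2}), (\ref{qlin3}) into the quadratic (\ref{cqlin3}), and (\ref{qlin4}) into the quadratic (\ref{cqlin4}). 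I would note in passing that the free term $c_n$ does not appear in any of (\ref{qlin1})--(\ref{qlin4}), so allowing $c_n$ to depend on $n$ does not disturb this reduction; $c_n$ enters only through the factor equation.

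With the equivalence of the two systems in hand, both directions of the stated biconditional follow from Corollary \ref{qlin}: a common nonzero root $\alpha$ of (\ref{cqlin1})--(\ref{cqlin4}) is precisely a constant solution $\alpha_n=\alpha$ of (\ref{qlin1})--(\ref{qlin4}), and conversely the presence of the constant form symmetry $h(u,v)=\alpha u+v$ forces $\alpha_n=\alpha$ for all $n$ and hence a common root. To recover the displayed factorization I would specialize the factor equation and cofactor of Corollary \ref{qlin} by replacing $\alpha_{n-1}$ with $\alpha$ and the $n$-indexed coefficients with their constant values; the cofactor $\alpha_n x_n+x_{n-1}=t_n$ becomes $\alpha x_n+x_{n-1}=t_n$, which, solved for $x_n$, gives $x_n=-x_{n-1}/\alpha+t_n/\alpha$.

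The only genuine work is the term-by-term verification of the polynomial reduction, which is routine algebra but demands care in grouping the mixed products $\alpha_n\alpha_{n-1}$, $\alpha_n^2\alpha_{n-1}$, and $\alpha_n\alpha_{n-1}^2$ under the identification $\alpha_n=\alpha_{n-1}=\alpha$. I expect no conceptual obstacle, since all of the analytic content is already carried by Corollary \ref{qlin} and the semi-invertible map criterion behind it.
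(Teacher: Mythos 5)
Your proposal is correct and is exactly the paper's (implicit) argument: the paper offers no separate proof of Corollary \ref{cqlin}, presenting it as the specialization of Corollary \ref{qlin} to constant coefficients with the constant sequence $\alpha_n=\alpha$, under which (\ref{qlin1})--(\ref{qlin4}) collapse term-by-term to (\ref{cqlin1})--(\ref{cqlin4}) and the factor/cofactor pair specializes as you describe. Your bookkeeping (e.g., $a_{0,2}\alpha_n\alpha_{n-1}\mapsto a_{0,2}\alpha^2$ merging into the $(a_{1,1}+a_{0,2})\alpha^2$ coefficient) checks out.
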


Equalities (\ref{cqlin1})-(\ref{cqlin4}) often lead to suitable parameter
restrictions implying the existence of a linear form symmetry for a given
difference equation. Here is a sample.

\begin{example}
Consider the following quadratic equation (no linear terms)%
\begin{equation}
x_{n}^{2}+ax_{n}x_{n-1}+bx_{n}x_{n-2}+cx_{n-1}x_{n-2}=\sigma_{n}.\label{qfs}%
\end{equation}
In the absence of linear terms in (\ref{qfs}) equality (\ref{cqlin4}) holds
trivially; the other three equalities (\ref{cqlin1})-(\ref{cqlin3}) take the
following forms%
\begin{align}
1-a\alpha+b\alpha^{2}-c\alpha^{3} &  =0\label{qfs1}\\
a-b\alpha+2c\alpha^{2} &  =0\label{qfs2}\\
b-c\alpha &  =0.\label{qfs3}%
\end{align}
From (\ref{qfs3}) it follows that $\alpha=b/c.$ This nonzero value of $\alpha$
must satisfy the other two equations in the above system so from (\ref{qfs2})
we obtain%
\[
a+\frac{b^{2}}{c}=0
\]
while (\ref{qfs1}) yields%
\[
1-\frac{ab}{c}=0\quad\text{or\quad}c=ab.
\]
Eliminating $b$ and $c$ from the last two equations gives
\[
b=-a^{2},\ c=-a^{3}\ \text{and }\alpha=\frac{1}{a}.
\]
These calculations indicate that the quadratic equation
\[
x_{n}^{2}+ax_{n}x_{n-1}-a^{2}x_{n}x_{n-2}-a^{3}x_{n-1}x_{n-2}=\sigma_{n}%
\]
has the linear form symmetry with the corresponding factorization%
\begin{align*}
a^{2}t_{n}^{2}-a^{3}t_{n}t_{n-1} &  =\sigma_{n},\\
x_{n}+ax_{n-1} &  =at_{n}.
\end{align*}

\end{example}

Corollary \ref{cqlin} also implies the \textit{non-existence} of a linear
form symmetry. The next example illustrates this fact.

\begin{example}
Let $a,b\in\mathbb{C}$ with $b\not =0$ and let $\{c_{n}\}$ be a sequence of
complex numbers. Then the difference equation%
\begin{equation}
x_{n}^{2}+ax_{n-1}^{2}+bx_{n-2}^{2}=c_{n} \label{disk}%
\end{equation}
does not have a linear form symmetry because the equality (\ref{cqlin3}) in
Corollary \ref{cqlin} does not hold for a nonzero complex number $\alpha.$ On
the other hand, the substitution $y_{n}=x_{n}^{2}$ transforms (\ref{disk})
into the non-homogeneous linear equation%
\[
y_{n}+ay_{n-1}+by_{n-2}=c_{n}%
\]
which does have a linear form symmetry in $\mathbb{C}$ (by Corollary
\ref{cqlin}, with equality (\ref{cqlin4}) implying that $-\alpha$ is an
eigenvalue of the homogeneous part). Substituting $x_{n}^{2}$ for $y_{n}$ in
the resulting cofactor equation gives a factorization of (\ref{disk}).
\end{example}


\begin{thebibliography}{99}
			%

\bibitem {bert} Bertram, W., ``Difference problems and differential problems'', in \textit{Contemp. Geom. Topol. and Related Topics}, Proceedings of Eighth Int. Workshop on Differential Geometry and its Applications, Cluj-Napoca, 73-87, 2007

\bibitem {cl}Camouzis, E. and Ladas, G., \textit{Dynamics of Third Order Rational Difference Equations with Open Problems and Conjectures}, Chapman and Hall/CRC Press, Boca Raton, 2008

\bibitem {ric2}Dehghan, M., Mazrooei-Sebdani, R. and Sedaghat, H., Global behavior of the Riccati difference equation of order two, \textit{J. Difference Eqs. and Appl.}, to appear.

\bibitem {dkmos1}Dehghan, M., Kent, C.M., Mazrooei-Sebdani, R., Ortiz, N.L. and Sedaghat, H., Monotone and oscillatory solutions of a rational difference equation
containing quadratic terms, \textit{J. Difference Eqs. and Appl.}, \textbf{14} (2008) 1045-1058.

\bibitem {dkmos2}Dehghan, M., Kent, C.M., Mazrooei-Sebdani, R., Ortiz, N.L. and Sedaghat, H., Dynamics of rational difference equations containing quadratic terms,
\textit{J. Difference Eqs. and Appl.}, \textbf{14} (2008) 191-208.

\bibitem {gl} Grove, E.A. and Ladas, G., \textit{Periodicities in Nonlinear Difference Equations}, CRC Press, Boca Raton, 2005

\bibitem {ks}Kent, C.M. and Sedaghat, H., Convergence, periodicity and bifurcations for the two-parameter absolute difference equation, \textit{J. Difference Eqs. and
Appl.}, \textbf{10} (2004) 817-841.

\bibitem {kol} Kocic, V.L. and Ladas, G., \textit{Global Behavior of Nonlinear Difference Equations of Higher Order with Applications}, Kluwer, Dordrecht, 1993

\bibitem {kul} Kulenovic, M.R.S. and Ladas, G., \textit{Dynamics of Second Order Rational Difference Equations with Open Problems and Conjectures}, Chapman and Hall/CRC, Boca Raton, 2002

\bibitem {hsfsorbk}Sedaghat, H., \textit{Form Symmetries and Reduction of Order in Difference Equations} (forthcoming) CRC Press, Boca Raton, 2011.

\bibitem {expow}Sedaghat, H., Reductions of order in difference equations defined as products of exponential and power functions, \textit{J. Difference Eqs. and Appl.}, to appear.

\bibitem {hstdfs}Sedaghat, H., Factorization of difference equations by semiconjugacy with application to non-autonomous linear equations (2010) http://arxiv.org/abs/1005.2428

\bibitem {hsijpam}Sedaghat, H.,  Semiconjugate factorization of non-autonomous higher order difference equations, \textit{Int. J. Pure and Appl. Math}., \textbf{62} (2010) 233-245.

\bibitem {hd1}Sedaghat, H., Every homogeneous difference equation of degree one admits a reduction in order, \textit{J. Difference Eqs. and Appl.}, \textbf{15} (2009) 621-624.

\bibitem {hsinvcrt}Sedaghat, H., Reduction of order in difference equations by semiconjugate factorizations, \textit{Int. J. Pure and Appl. Math}., \textbf{53} (2009) 377-384.

\bibitem {ord23}Sedaghat, H., Global behaviors of rational difference equations of orders two and three with quadratic terms, \textit{J. Difference Eqs. and Appl.}, \textbf{15} (2009) 215-224.

\bibitem {kyoto}Sedaghat, H., Periodic and chaotic behavior in a class of second order difference equations, \textit{Adv. Stud. Pure Math.}, \textbf{53} (2009) 321-328.

\bibitem {hsarx}Sedaghat, H., Semiconjugate factorization and reduction of order in difference equations (2009) http://arxiv.org/abs/0907.3951

\bibitem {hsbk}Sedaghat, H., \textit{Nonlinear Difference Equations: Theory with Applications to Social Science Models}, Kluwer, Dordrecht, 2003.

\bibitem {smit}Smital, J., Why it is important to understand the dynamics of triangular maps", \textit{J. Difference Eqs. and Appl.}, \textbf{14} (2008) 597-606.

\end{thebibliography}
\end{document}